\documentclass{llncs}

\usepackage{amssymb,stmaryrd,mathrsfs,dsfont,mathtools}

\input xy
\xyoption{all}

\usepackage[ruled,linesnumbered]{algorithm2e}
\usepackage{algpseudocode}
\usepackage{tikz}
\tikzstyle{vertex}=[circle, draw, inner sep=0pt, minimum size=3pt, fill=black]
\newcommand{\vertex}{\node[vertex]}

\begin{document}

\frontmatter         

\pagestyle{headings}  

\mainmatter             

\title{Words for the Graphs with Permutation-Representation Number at most Three}

\titlerunning{Words for the Graphs with Permutation-Representation Number at most Three}

\author{Khyodeno Mozhui \and K. V. Krishna}

\authorrunning{Khyodeno Mozhui \and K. V. Krishna}

\institute{Indian Institute of Technology Guwahati, Assam, India,\\
	\email{k.mozhui@iitg.ac.in};
	\email{kvk@iitg.ac.in}}

\maketitle         

\begin{abstract}
	The graphs with permutation-representation number (\textit{prn}) at most two are known. While a characterization for the class of graphs with the \textit{prn} at most three is an open problem, we summarize the graphs of this class that are known so far. Although it is known that the \textit{prn} of trees is at most three, in this work, we devise a polynomial-time algorithm for obtaining a word representing a given tree permutationally. Consequently, we determine the words representing even cycles. Contributing to the class of graphs with the \textit{prn} at most three, we determine the \textit{prn} as well as the representation number of book graphs. 
\end{abstract}

\keywords{Word-representable graphs, trees, cycles, book graphs, representation number, graph algorithm.}

\section{Introduction and Preliminaries}

Let $X$ be a finite set. A word over $X$ is a finite sequence of elements of $X$ written by juxtaposing them. A word $u$ is called a subword of a word $w$, denoted by $u \le w$, if  $u$ is a subsequence of $w$. Suppose $w$ is a word over $X$, and $Y$ is a subset of $X$. The subword (possibly the empty sequence) of $w$ obtained by deleting all letters belonging to $X \setminus Y$ from $w$ is denoted by $w_Y$. Suppose $w$ is a word over $X$ and $a, b \in X$. We say $a$ and $b$ alternate in $w$ if $w_{\{a,b\}}$ is of the form $abababa\cdots$ or $bababa\cdots$ of even or odd length. A word in which every letter appears exactly $k$ number of times is called a $k$-uniform word. 

A simple graph $G = (V, E)$ is called a word-representable graph if there exists a word $w$ over its vertex set $V$ such that, for all $a, b \in V$, $\overline{ab} \in E$ if and only if $a$ and $b$ alternate in $w$. In \cite{kitaev08}, it was proved that if a graph is word-representable then there are infinitely many words representing it.  Further, every word-representable graph is represented by a $k$-uniform word, for some positive integer $k$. For a detailed introduction to the theory of word-representable graphs, one may refer to the monograph by Kitaev and Lozin \cite{kitaev15mono}.

A word-representable graph is said to be $k$-word-representable if it is represented by a $k$-uniform word. The smallest $k$ such that a graph $G$ is $k$-word-representable is called the representation number of the graph, and it is denoted by $\mathcal{R}(G)$. The class of word-representable graphs with representation number $k$ is denoted by $\mathcal{R}_k$ and the class with representation number at most $k$ is denoted by $\mathcal{R}_{\le k}$. While $\mathcal{R}_1$ is the class of complete graphs, it was shown in \cite{halldorsson11} that $\mathcal{R}_{\le 2}$ is the class of circle graphs. A circle graph is the intersection graph of a set of chords of a circle. The class of circle graphs was characterized by three minimal forbidden vertex minors \cite{bouchet}. In \cite{halldorsson11}, it was proved that deciding whether a word-representable graph has representation number $k$, for any fixed $k$, $  3 \le k \le \lceil \frac{n}{2} \rceil$, is NP-complete. Nevertheless, the representation number for some specific classes of graphs was obtained, in addition to some isolated examples. For example, while the representation number was established for prisms \cite{kitaev13} and crown graphs \cite{glen18}, upper bounds for the representation number were obtained for 3-colorable graphs \cite{halldorsson16}, bipartite graphs \cite{KM_KVK_1} and $k$-cubes \cite{broere_2019}.

A permutationally representable graph is a word-representable graph that can be represented by a word of the form $p_1p_2\cdots p_k$ where each $p_i$ is a permutation of its vertices. Considering the number of permutations $k$, such a graph is called a permutationally $k$-representable graph. It was shown in \cite{kitaev08order} that the class of permutationally representable graphs is precisely the class of comparability graphs - the graphs which admit transitive orientation. An orientation of a graph is an assignment of direction to each edge so that the graph obtained is a directed graph. If the adjacency relation on the vertices of the resulting directed graph is transitive then the orientation is said to be transitive. The permutation-representation number (in short, \textit{prn}) of a comparability graph $G$, denoted by $\mathcal{R}^p(G)$, is the smallest number $k$ such that the graph is permutationally $k$-representable. The class of graphs with \textit{prn} $k$ is denoted by $\mathcal{R}^p_k$, and the class with the \textit{prn} at most $k$ is denoted by $\mathcal{R}^p_{\le k}$. Note that $\mathcal{R}(G) \le \mathcal{R}^p(G)$ and $\mathcal{R}_{\le k} \subseteq \mathcal{R}^p_{\le k}$. In \cite{halldorsson11}, it was shown that determining the \textit{prn} of a comparability graph is same as determining the dimension of the partially ordered set (poset) induced by the comparability graph. In \cite{yanna82}, Yannakakis showed that deciding whether a poset with $n$ elements has dimension at most $k$, for fixed $k$, where $  3 \le k \le \lceil \frac{n}{2} \rceil$, is NP-hard. Accordingly, it is NP-hard to decide whether a comparability graph with $n$ vetices has the \textit{prn} at most $k$, for fixed $k$, where $  3 \le k \le \lceil \frac{n}{2} \rceil$. In \cite{5}, it was shown that it is NP-hard to approximate the dimension of a poset to within almost a square root factor.

It is clear that $\mathcal{R}_1 = \mathcal{R}^p_1$ is the class of complete graphs. In this paper, we reconcile the class $\mathcal{R}^p_{\le 2}$ as the class of permutation graphs. For $k \ge 3$, characterizations for $\mathcal{R}_k$ and for $\mathcal{R}^p_k$ are open problems. Focusing on identifying the subclasses of graphs with \textit{prn} three, first we recall the classes of comparability graphs with \textit{prn} at most three. Although it was known that the trees are in the class $\mathcal{R}^p_{\le 3}$, explicit construction of words representing trees permutationally is not available. In this work, we give a polynomial-time algorithm to generate 3-uniform words that represents a tree permutationally. This algorithm enables us to generate words for paths and even cycles\footnote{The cycles on an odd number of vertices are not comparability graphs.} as well. Further, using the algorithm, we establish the \textit{prn} of book graphs. We also determine the representation number of book graphs and show that it is a class of graphs for which the \textit{prn} and  representation number are the same.

\section{The Class $\mathcal{R}^p_{\le 2}$} \label{permutation_graph}

In this section, we reconcile the characterization of $\mathcal{R}^p_{\le 2}$ in terms of permutation graphs. In \cite{Even_1972,Punueli_1971}, the concept of a permutation graph was introduced  and characterized as a comparability graph whose complement is also a comparability graph.    

A graph $G = (V, E)$ is called a permutation graph if there exist two permutations $p_1$ and $p_2$ on $V$ such that two vertices $a, b \in V$ are adjacent if and only if $ab \le p_1$ and $ba \le p_2$. It can be easily observed that permutation graphs are permutationally 2-representable graphs. For instance, the word $p_1r(p_2)$, where $r(p_2)$ is the reversal of $p_2$, represents permutationally a permutation graph. Conversely, if a graph is permutationally 2-representable, it is a permutation graph. Note that a complete graph is a permutation graph. Accordingly, we record the characterization of the class $\mathcal{R}^p_{\le 2}$ in the following result.

\begin{theorem}	
	The class of permutation graphs is precisely the class of graphs with \textit{prn} at most two.	
\end{theorem}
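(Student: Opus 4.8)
The plan is to prove the two inclusions separately, showing that a graph is a permutation graph if and only if it lies in $\mathcal{R}^p_{\le 2}$. The excerpt has already sketched both directions, so I would organize the argument around making those sketches rigorous. Recall that $\mathcal{R}^p_{\le 2}$ consists of graphs that are permutationally $k$-representable for some $k \le 2$, i.e.\ graphs representable either by a single permutation (the complete graphs, which form $\mathcal{R}^p_1$) or by a concatenation $p_1 p_2$ of two permutations.

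\medskip

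\noindent\textbf{Permutation graph $\Rightarrow$ prn at most two.} First I would take a permutation graph $G = (V,E)$ with witnessing permutations $p_1, p_2$ on $V$, so that $a,b$ are adjacent iff $ab \le p_1$ and $ba \le p_2$. The claim is that the word $w = p_1\, r(p_2)$, where $r(p_2)$ denotes the reversal of $p_2$, represents $G$ permutationally; since $w$ is a concatenation of two permutations, this certifies $\mathcal{R}^p(G) \le 2$. The key verification is that for any pair $a,b$, the letters $a$ and $b$ alternate in $w$ exactly when $ab \le p_1$ and $ba \le p_2$. Here $w_{\{a,b\}}$ has length four (each letter appears once in $p_1$ and once in $r(p_2)$), so alternation means the pattern is $abab$ or $baba$. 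The first half of $w_{\{a,b\}}$ is determined by the order of $a,b$ in $p_1$, and the second half by their order in $r(p_2)$, which is the reverse of their order in $p_2$. A short case analysis shows that the two occurrences interleave to form $abab$ precisely when $a$ precedes $b$ in $p_1$ and $b$ precedes $a$ in $p_2$ (or the symmetric condition), which is exactly the adjacency criterion.

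\medskip

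\noindent\textbf{Prn at most two $\Rightarrow$ permutation graph.} Conversely, suppose $G \in \mathcal{R}^p_{\le 2}$. If $G \in \mathcal{R}^p_1$ it is complete, and a complete graph is trivially a permutation graph (take $p_1$ any order and $p_2$ its reversal). Otherwise $G$ is represented by a $2$-uniform word $w = q_1 q_2$ with $q_1, q_2$ permutations of $V$. I would reverse the argument above: set $p_1 = q_1$ and $p_2 = r(q_2)$, and show that $a,b$ are adjacent in $G$ iff $ab \le p_1$ and $ba \le p_2$. Since adjacency in $G$ is defined by alternation in $w$, and $w_{\{a,b\}}$ again has length four, the same case analysis translating between the alternation pattern and the relative orders in $q_1, q_2$ yields the permutation-graph condition, so $G$ is a permutation graph. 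Combining both inclusions with the known fact $\mathcal{R}^p_1 \subseteq \mathcal{R}^p_{\le 2}$ (complete graphs being permutation graphs) gives the stated equality.

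\medskip

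The routine computations are genuinely routine: the entire content reduces to tracking, for a fixed pair of letters, how their order within each of the two permutation-blocks dictates whether the four-letter projection alternates. The main thing to get right, and the one place where care is needed, is the bookkeeping of the reversal: one must be consistent about whether $ba \le p_2$ corresponds to $a,b$ appearing in a particular order in $r(p_2)$, so that the ``if and only if'' lines up correctly in both directions. I expect the cleanest presentation simply states the word $p_1 r(p_2)$, checks the alternation condition against the definition of a permutation graph in one unified lemma-style calculation, and then observes that the construction is reversible, handling the complete-graph case as a degenerate instance.
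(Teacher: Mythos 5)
Your proposal is correct and follows essentially the same route as the paper, which likewise certifies the forward direction by the word $p_1\,r(p_2)$ and notes that the converse follows by reversing the construction, with complete graphs handled as the degenerate case. Your write-up merely fills in the four-letter projection case analysis that the paper leaves implicit.
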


While \cite{Gallai} provides a list of forbidden induced subgraphs for permutation graphs, a succinct characterization of permutation graphs was given in terms of forbidden Seidel minors in \cite{Vincent}. In \cite{golumbic}, an $O(n^3)$-time procedure was given for constructing words representing any permutation graph permutationally. 

\section{The Class $\mathcal{R}^p_{\le 3}$}

This section, summarizes the comparability graphs whose \textit{prn} was known to be at most three. Clearly, the class of permutation graphs belongs to this class. A treelike comparability graph is a graph that admits a transitive orientation in which the cover graph of the associated poset is a tree. In \cite{trotter_dim}, it was proved that the treelike comparability graphs have the \textit{prn} at most three. Consequently, the trees have the \textit{prn} at most three. An outerplanar graph is a planar graph in which all its vertices lie on the unbounded face of a planar drawing. The bipartite graphs that are outerplanar are 2-word-representable and have the \textit{prn} at most three \cite{kitaev08,Felsner_2015}. It was shown in \cite{Guenver_2004} that the split comparability graphs are in $\mathcal{R}^p_{\le 3}$, where a split graph is a graph whose vertices can be partitioned into a clique and an independent set. Recently, it has been proved that the extended crown graphs defined by posets of width two are also in $\mathcal{R}^p_{\le 3}$ \cite{KM_KVK_2}.

\section{Trees}

In this section, through a polynomial-time procedure, we construct three permutations on the vertices of a tree whose concatenation represents the tree permutationally. 

Let $T$ be a tree, i.e., a connected acyclic graph, and $r$ be the root of $T$, i.e., a distinguished vertex of $T$. A vertex $a$ is called a child of a vertex $b$ if $b$ is adjacent to $a$ on the path from the root $r$ to $a$. In this case, $b$ is called the parent of $a$. Any other vertex $c$ (other than the parent $b$) on the path from $r$ to $a$ is called an ancestor of $a$. If a vertex has no children, then it is called a leaf.
For a positive integer $k$, a $k$-ary tree is a tree in which every vertex has at most $k$ children.
	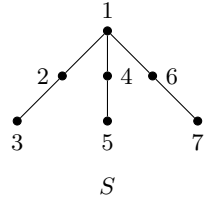
\begin{figure}
	\centering
	\begin{tikzpicture}[scale=0.6]
		\node (A) at (0,-4) [label=above:$S$] {};  
		\vertex (1) at (0,0) [label=above:$1$] {};  
		\vertex (2) at (-1,-1) [label=left:$2$] {}; 
		\vertex (3) at (0,-1) [label=right:$4$] {}; 
		\vertex (4) at (1,-1) [label=right:$6$] {}; 
		\vertex (5) at (-2,-2) [label=below:$3$] {}; 
		\vertex (6) at (0,-2) [label=below:$5$] {}; 
		\vertex (7) at (2,-2) [label=below:$7$] {}; 
		\path
		(1) edge (2)
		(1) edge (3)
		(1) edge (4)
		(2) edge (5)
		(3) edge (6)
		(4) edge (7);
	\end{tikzpicture}
	\caption{Forbidden induced subgraph for the class $\mathcal{R}^p_2$}
	\label{S}
\end{figure}

In view of the characterization of permutation graphs \cite[Theorem 10]{Vincent}, we have the following remark. 
\begin{remark}\label{thm-10vincent}
	The trees which are not having the graph $S$ given in Figure \ref{S} as an induced subgraph are permutation graphs. Hence, such trees on at least three vertices have the \textit{prn} two. 
\end{remark}	
 
Accordingly, $\mathcal{R}^p(S) \ge 3$. Further, it can be observed that $\mathcal{R}^p(S) = 3$, as the word  234615767452132345671 represents the graph $S$ permutationally. In the following, we show that the trees containing $S$ as a induced subgraph have the \textit{prn} three.

Let $T$ be an arbitrary tree. Suppose $k$ is the maximum degree of $T$. We can view $T$ as a $k$-ary tree. First, assign an odd number $a_r$ as the label of the root. Then, set all the children of an odd label vertex with even numbers as labels and vice versa. Let $A$ be the set of vertices with even labels and $B$ be the set of vertices with odd labels in $T$. Note that $\{A, B\}$ is a bipartition of $T$. With this description, a tree $T$ is always identified by $(A \cup B, E)$ in the following.  

\begin{algorithm}[!htb] 
	\caption{Generating permutations for a $k$-ary tree.}
	\label{algo-1}
	\KwIn{A $k$-ary tree $T = (A \cup B, E)$ with the root $a_r$.}
	\KwOut{Three permutations $p_1$, $p_2$ and $p_3$ of the vertices of $T$.}
	Let $Q$ be a queue.\\
	Append $a_r$ to $Q$.\\
	Initialize $p_1 = a_r ,p_2 = a_r, p_3 = a_r$;\\
	\While{$Q$ is not empty}{
		Remove the the first element $a$ from $Q$. \\
		\If{$a$ is not a leaf}{
			Let $a_{1} < a_{2} < \dots < a_{k}$ be the children of $a$.\\
			\If{$a$ is odd}{Replace $a$ by the word $a_{1} a_{2} \cdots a_{k} a$ in $p_1$.\\
				Replace $a$ by the word $a_{k} a_{{k-1}} \cdots a_{1} a$ in $p_2$.\\
				$p_3 \gets a_{1} a_{2} \cdots a_{k} p_3$}
			\Else{
				Let $a'$ be the parent of $a$ and $b_{1} < b_{2} < \dots < b_{k}$ be the children of $a'$.\\
				\If{$a = b_1$}{
					Replace $a$ by the word $aa_{1} a_{2} \cdots a_{k}$ in $p_1$.\\
					Replace $aa'$ by the word $a a' a_{k} a_{{k-1}} \cdots a_{1}$ in $p_2$.\\
					Replace $a$ by the word $a a_{1} a_{2} \cdots a_{k} $ in $p_3$.}
				\Else{ Suppose $a = b_l$ ($1 < l \le k$). \\
					Let $D_{b_i}$ denote the permutation consisting only the children of $b_i$ written in the increasing order of their indices.\\
					Replace $a b_{l+1} b_{l+2} \cdots b_{k} a' D_{b_2} D_{b_3} \cdots D_{b_{l-1}}$ by the word $a  b_{l+1} b_{l+2} \cdots b_{k} a' D_{b_2} D_{b_3} \cdots D_{b_{l-1}} a_{1} a_{2} \cdots a_{k}$ in $p_1$. \\
					Replace $a$ by the word $a a_{k} a_{{k-1}} \cdots a_{1}$ in $p_2$.\\
					Replace $a$ by the word $a a_{1} a_{2} \cdots a_{k} $ in $p_3$. 			}}
			Append the children of $a$ which are not leaves to $Q$ in the increasing order of their labels.
		}
		
	}
\Return $p_1,p_2,p_3$
\end{algorithm}

Given a $k$-ary tree $T = (A \cup B, E)$ with the root $a_r$, Algorithm \ref{algo-1} produces three permutations $p_1,p_2$ and $p_3$ on the vertices of $T$. Note that the permutation $p_i$ ($1 \le i \le 3$) obtained in each step of the algorithm is a subword of the respective permutation updated in the next step. In Theorem \ref{complete}, we prove the correctness of Algorithm \ref{algo-1} that the word $p_1p_2p_3$ represents $T$. A demonstration of Algorithm  \ref{algo-1} for obtaining a word that represents a $k$-ary tree permutationally is given in Appendix \ref{tree-demo}.

Note that Algorithm \ref{algo-1} follows breadth first search (BFS) for visiting the vertices of the input tree. If $n$ is the number of vertices, visiting the vertices takes $O(n)$ time, as input being a tree \cite{cormen}. When each vertex is visited, the permutations $p_1, p_2$, and $p_3$ are updated, which takes $O(n)$ time at each vertex. Hence, Algorithm \ref{algo-1} runs in polynomial time, as stated in the following result. 
\begin{theorem}\label{complexity-1}
	Algorithm \ref{algo-1} runs in $O(n^2)$ time on any tree $T$ with $n$ vertices.  
\end{theorem}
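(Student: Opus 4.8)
The plan is to bound separately the cost of the breadth-first traversal and the cost of the permutation updates, and then to combine them. First I would observe that the algorithm dequeues each vertex of $T$ exactly once, so it processes the vertices in BFS order. Since $T$ is a tree on $n$ vertices, and hence has $n-1$ edges, the standard analysis of BFS shows that the enqueue, dequeue, and adjacency-inspection operations cost $O(n)$ in total, independently of the permutation bookkeeping \cite{cormen}.

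Next I would analyse the work performed when a single vertex $a$ is dequeued. At every stage of the algorithm, each of $p_1, p_2, p_3$ is a word consisting of distinct vertices of $T$, so each has length at most $n$. Every update that the algorithm carries out consists of locating a fixed pattern inside one of the $p_i$ and replacing it by an extended word obtained by splicing in the children $a_1, \dots, a_k$ of $a$. Under any standard representation of the permutations (for instance arrays or doubly linked lists), locating the pattern needs a single scan of a word of length at most $n$, and inserting the $\deg(a)$ new letters costs $O(\deg(a)) = O(n)$. As only a constant number of such updates, namely one for each $p_i$, are applied per vertex, the total work at each vertex is $O(n)$.

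The branch that needs the most care is the final case, where $a = b_l$ with $1 < l \le k$: here the pattern to be matched in $p_1$ is the comparatively long word $a\, b_{l+1} \cdots b_k\, a'\, D_{b_2} \cdots D_{b_{l-1}}$, which refers to the later siblings of $a$, the parent $a'$, and the children of the earlier siblings. The main point to verify is that this pattern, although not of constant size, still has length $O(n)$, so that it can be located and extended by a single pass through $p_1$. This holds because every letter of the pattern is a distinct vertex of $T$, whence its length is at most $n$; thus even this branch costs $O(n)$ per vertex.

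Finally, summing the per-vertex cost over all $n$ vertices yields $O(n^2)$ for the updates, which dominates the $O(n)$ cost of the traversal. Hence Algorithm \ref{algo-1} runs in $O(n^2)$ time on any tree with $n$ vertices, as claimed.
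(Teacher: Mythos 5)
Your argument is correct and follows essentially the same route as the paper: a BFS traversal costing $O(n)$ overall, plus an $O(n)$ bound on the permutation updates at each vertex, summed over $n$ vertices to give $O(n^2)$. The extra care you take with the long pattern in the $a=b_l$ branch is a welcome refinement of the paper's terser justification, but it does not change the structure of the proof.
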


\begin{lemma}
	Let $T=(A \cup B, E)$ be a $k$-ary tree. Suppose $a$ is an ancestor of $b $ in $T$. 
	\begin{enumerate}
		\item If $a \in A$ then $ab \le p_1$.
		\item If $a \in B$ then $ab \le p_1$ or $ab \le p_2$.
	\end{enumerate}
\end{lemma}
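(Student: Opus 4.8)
The plan is to analyze how the algorithm places an ancestor $a$ and its descendant $b$ relative to one another in the permutations $p_1$ and $p_2$, arguing by induction on the distance from $a$ to $b$ along the tree path. The base case is when $a$ is the parent of $b$. Tracing through Algorithm~\ref{algo-1}, when a vertex is processed, its children are always inserted immediately after it (reading left to right) in $p_1$, so a parent always precedes each of its children in $p_1$; and the sign structure of the labels (recall $A$ is the even-label set, $B$ the odd-label set, and children flip parity) means we must keep careful track of which of $p_1$, $p_2$ the relevant ordering lands in depending on whether $a\in A$ or $a\in B$.

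For part~(1), when $a\in A$ (even label), I would show that in every branch of the algorithm that modifies $p_1$, a vertex with an even label is placed before all of its children in $p_1$: in the ``$a$ is odd'' branch we are inserting the even children $a_1\cdots a_k$ before $a$, and in the even branches ($a=b_1$ or $a=b_l$) the children $a_1\cdots a_k$ are appended after $a$. The key observation is that once $a$ precedes $b$ in a permutation at the step where $b$'s subtree is created, this order is preserved in all later updates, because each update only inserts new letters and never reorders existing ones (as the excerpt notes, each $p_i$ is a subword of its successor). Then I would lift the parent-child base case to a general ancestor by composing the relations along the path: if $a\in A$ is an ancestor of $b$, the path $a=c_0,c_1,\dots,c_m=b$ alternates parities, and I would show transitively that $ac_1\le p_1$, and that whenever $c_i\in A$ we get $c_ic_{i+1}\le p_1$, stitching these together so that $ab\le p_1$.

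For part~(2), when $a\in B$ (odd label), the ancestor relation need not be realized entirely inside $p_1$, which is exactly why the statement offers the disjunction $ab\le p_1$ \emph{or} $ab\le p_2$. Here the essential case split is on the parity of the descendant $b$ (equivalently, on the parity of the length of the path from $a$ to $b$): for descendants reached after an even number of steps the order should survive in $p_1$, while for those reached after an odd number of steps I expect to need $p_2$, where the ``$a$ is odd'' branch inserts children in reversed order $a_k\cdots a_1 a$, placing $a$ after its children but in a way that still yields the required subword relation along longer paths. I would again argue inductively along the path, using at each odd-labeled vertex the freedom to switch between $p_1$ and $p_2$.

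The main obstacle will be the bookkeeping in the ``$a=b_l$'' ($1<l\le k$) branch for $p_1$, where the replacement is not a simple local insertion but rewrites a long factor $a\,b_{l+1}\cdots b_k\,a'\,D_{b_2}\cdots D_{b_{l-1}}$ and appends $a_1\cdots a_k$ only at its right end. I must verify that in this case the new children $a_1,\dots,a_k$ still end up \emph{after} $a$ in $p_1$ and, crucially, that inserting them at the end of this block does not disturb any previously established ancestor-descendant order that the induction relies on. Confirming that this delayed placement is consistent with the preservation-under-subwords principle, and that it interacts correctly with the parity-based choice between $p_1$ and $p_2$, is where the argument demands the most care.
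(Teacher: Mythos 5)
Your overall strategy --- tracing the path from $a$ down to $b$ and tracking, step by step, on which side of $a$ the algorithm inserts each successive vertex --- is the same as the paper's, and your treatment of part (1) is essentially right. One caution there: you cannot literally chain relations $c_ic_{i+1}\le p_1$ along the path, since an odd-labelled $c_i$ has its children inserted \emph{before} it in $p_1$; what you must track is that every insertion happens to the right of $a$, which your preservation-under-updates remark does capture.

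The genuine gap is in part (2): the dichotomy deciding whether $ab$ lands in $p_1$ or in $p_2$ is \emph{not} the parity of the distance from $a$ to $b$. It is whether $a_1$, the child of $a$ on the path towards $b$, is the leftmost (smallest-labelled) child of $a$. When $a\in B$ is processed, its children are placed before $a$ in both $p_1$ and $p_2$; when the even child $a_1$ is then processed, the $a=b_1$ branch inserts $a_1$'s children \emph{after} $a$ in $p_2$ but \emph{before} $a$ in $p_1$, whereas the $a=b_l$ ($l>1$) branch inserts them after $a$ in $p_1$ but before $a$ in $p_2$. From that point on the entire subtree below $a_1$ stays on the corresponding side of $a$ in the corresponding permutation, independently of how much deeper $b$ lies. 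Your parity-based split therefore fails already for a grandchild $b$ of $a$ reached through a leftmost child $a_1$: the distance is $2$, yet $ba\le p_1$ and $ab\le p_2$, contradicting your prediction that even-distance descendants are handled by $p_1$. The correct case analysis (as in the paper) carries a single flag --- ``is $a_1$ the leftmost child of $a$?'' --- down the whole path; the parity of the intermediate vertices only affects the local shape of the subword, never which permutation witnesses $ab$.
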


\begin{proof}
		Let $a=a_0, a_1, a_2, \ldots, a_k, a_{k+1}=b$ be the path from $a$ to $b$ such that $a_i$ is the parent of $a_{i+1}$ for all $0 \le i \le k$.

Note that an ancestor of a vertex does not include its parent.
As per the algorithm, the ancestor of any vertex is visited first. Therefore, $a$ is visited first then $a_1, a_2, \ldots, a_k$ and lastly $b$.
\begin{enumerate}
	\item If $a \in A$ then $a_1 \in B$. We have $a a_1 \le p_1$. Next $a_1$ is visited and $a_2 \in A$ being the child of $a_1$, we have $a a_2 \le a a_2 a_1 \le p_1$. Then $a_2$ is visited and $a_3 \in B $ being the child of $a_2$. If $a_2$ is the leftmost child of $a_1$ then $a a_3 \le a a_2 a_3 a_1 \le p_1$. If $a_2$ is not the leftmost child of $a_1$, then $a a_3 \le a a_2 a_1 a_3 \le p_1$. Continue this till $a_k$ is visited. If $a_k$ is odd, then $a b \le a b a_k \le p_1$. If $a_k$ is even, then $ab \le a a_k b \le p_1$.
	Hence, $ab \le p_1$.
	
	\item If $a \in B$ then $a_1 \in A$. We have $a_1 a \le p_1$ and $a_1 a \le p_2$. Next, $a_1$ is visited, and $a_2 \in B$ being the child of $a_1$. If $a_1$ is the leftmost child of $a$, then $a a_2 \le a_1 a a_2 \le p_2$. If $a_1$ is not the leftmost child of $a$, then $a a_2 \le a_1 a a_2 \le p_1$. Then $a_2$ is visited and $a_3 \in A$ being the child of $a_2$, we have $a a_3 \le a a_3 a_2 \le p_2$ (in case of $a_1$ being the leftmost child of $a$) or $a a_3 \le a a_3 a_2 \le p_1$ (in case of $a_1$ not being the leftmost child of $a$). Continue this till $a_k$ is visited; if $a_k$ is odd, then $a b \le a b a_k \le p_1$ (in case of $a_1$ not being the leftmost child of $a$) or $a b \le a b a_k \le p_2 $ (in case of $a_1$ being the leftmost child of $a$). If $a_k$ is even, then $a b \le a a_k b\le p_1$ or $ a a_k b \le p_2$.
	Hence, $ab \le p_1$ or $ab \le p_2$. \qed
\end{enumerate}	
\end{proof}

\begin{lemma}
	If $a$ is an ancestor of $b$ in a $k$-ary tree $T=(A \cup B, E)$ then $ba \le p_3$. 
\end{lemma}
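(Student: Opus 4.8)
The plan is to build the whole argument on a single structural observation about how Algorithm~\ref{algo-1} updates $p_3$. There are only two update rules: when the processed vertex $a$ is odd, its children are prepended, $p_3 \gets a_1 a_2 \cdots a_k\, p_3$, and when $a$ is even, the occurrence of $a$ is replaced by $a\, a_1 a_2 \cdots a_k$. Both operations preserve the relative order of every pair of letters already present in $p_3$, since the first merely adds new letters at the front and the second merely inserts new letters immediately after $a$. Therefore the relative order of any two vertices in the final word is already decided at the instant the later-inserted of the two first appears. Since a vertex $b$ is inserted exactly when its parent $q$ is processed, and since every ancestor $a$ of $b$ is strictly shallower than $b$ — hence already present at that instant (the root from initialization, any other ancestor when its own parent is processed, which in the BFS order happens before $q$ is processed) — it suffices to settle, once and for all, the order of $b$ against each $a$ at the moment $q$ is processed.

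Next I would reformulate the goal at the level of the parent $q$. By the paper's convention the ancestors of $b$ are precisely the proper ancestors of $q$, so the claim reduces to: when $q$ is processed, every newly inserted child of $q$ is placed before every proper ancestor of $q$ in $p_3$. If $q$ is odd, this is immediate, because its children are prepended to the very front and so precede all letters currently present, in particular all ancestors of $q$. The real work is the case where $q$ is even, in which the children are inserted directly after $q$; for this I would prove the key invariant that, at the moment $q$ is processed, $q$ itself already precedes all of its proper ancestors in $p_3$. Granting the invariant, inserting the children of $q$ immediately after $q$ places them before every ancestor of $q$, which is exactly what is needed.

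To establish the invariant for an even vertex $q$, I would use that $q$ has an odd parent $q'$, and that $q$ was first inserted into $p_3$ precisely when $q'$ was processed, namely by prepending to the front. At that instant $q$ is placed before every letter then present, and all proper ancestors of $q$ (which are $q'$ itself and vertices shallower than $q'$) are already present; hence $q$ precedes them all right then. By the order-preservation property, this precedence is never undone by the intervening BFS steps, so it still holds when $q$ is eventually processed. I expect this invariant to be the main obstacle, since it must control the position of an even vertex relative to all of its ancestors across many intervening updates; the point that dissolves the difficulty is exactly the order-preservation property of the two update rules, which reduces a statement about the final word to the single step in which each vertex is born. Assembling these pieces yields $b$ before $a$, i.e. $ba \le p_3$, for every ancestor $a$ of $b$.
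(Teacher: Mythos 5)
Your proof is correct. It rests on the same two facts about how Algorithm~\ref{algo-1} updates $p_3$ --- an odd vertex prepends its children to the front, an even vertex inserts its children immediately after itself, and neither operation disturbs the relative order of letters already present --- but it organizes them quite differently from the paper. The paper walks the entire path $a=a_0,a_1,\ldots,a_{k+1}=b$ from the ancestor down to $b$ and tracks, step by step, a growing subword of $p_3$ witnessing that each $a_i$ ends up to the left of $a$. You instead make the order-preservation property explicit as an invariant and reduce the whole statement to at most two local events: the insertion of $b$ when its parent $q$ is processed, and, when $q$ is even, the earlier insertion of $q$ itself at the front of $p_3$ by its odd parent. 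This buys a shorter argument that handles all ancestors of $b$ simultaneously (a child of an odd vertex lands before \emph{everything} currently present, not merely before the next vertex on the path), and it correctly isolates why the paper's convention that ``ancestor'' excludes the parent is essential: a child of an even vertex sits immediately \emph{after} that parent, so $ba \le p_3$ would fail if $a$ were allowed to be the parent of $b$. The paper's path induction is more verbose for $p_3$, but it is the template that the companion lemma for $p_1$ and $p_2$ genuinely needs, since there the placement of inserted children depends on whether a vertex is the leftmost child of its parent; for $p_3$ alone, your localized argument is the cleaner of the two.
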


\begin{proof}
	Let $a=a_0, a_1, a_2, \ldots, a_k, a_{k+1}=b$ be the path from $a$ to $b$ such that $a_i$ is the parent of $a_{i+1}$, for all $0 \le i \le k$.

As per the algorithm, the ancestor of any vertex is visited first. Therefore, $a$ is visited first then $a_1, a_2, \ldots, a_k$ and lastly $b$. 
\begin{itemize}
	\item If $a \in A$ then $a_1 \in B$. We have $a a_1 \le p_3$. Next $a_1$ is visited and $a_2 \in A$ being the child of $a_1$, we have $a_2 a \le a_2 a a_1 \le p_3$. Then $a_2$ is visited, and $a_3 \in B $ being the child of $a_2$, we have $a_3 a \le a_2 a_3 a \le p_3$. Continue this till $a_k$ is visited. If $a_k$ is odd, then $ba \le b a_k a \le p_3$. If $a_k$ is even, then $ba \le  a_k b a \le p_3$.
	Hence, $ba \le p_3$.
	\item If $a \in B$ then $a_1 \in A$. We have $a_1 a \le p_3$. Next $a_1$ is visited and $a_2 \in B$ being the child of $a_1$, we have $a_2 a \le a_1 a_2 a \le p_3$. Then $a_2$ is visited and $a_3 \in A$ being the child of $a_2$, we have $a_3 a \le a_3 a_2 a \le p_3$ Continue this till $a_k$ is visited if $a_k$ is odd, then $ba \le b  a_k a \le p_3$. If $a_k$ is even, then $b a \le a_k b a\le p_3$.
	Hence, $ba \le p_3$. \qed
\end{itemize}		
\end{proof}

\begin{corollary}\label{ancestor}
	Let $T=(A \cup B, E)$ be a $k$-ary tree. If $a$ is an ancestor of $b$ in $T$ then $ab \le p_i$ and $ba \le p_j$, for some $i \ne j$. In otherwords, $a$ and $b$ do not alternate in the word $p_1p_2p_3$.
\end{corollary}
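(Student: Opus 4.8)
The plan is to derive Corollary \ref{ancestor} directly from the two preceding lemmas, treating it as an immediate consequence rather than a fresh argument. The corollary asserts two things for an ancestor-descendant pair $(a,b)$: that $ab \le p_i$ for some index $i$, that $ba \le p_j$ for some index $j$, and crucially that these indices can be taken distinct. So the first step is to invoke the lemmas to locate such occurrences, and the main task is to verify that the forward occurrence and the backward occurrence land in different permutations.

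First I would split into the two cases according to whether $a \in A$ or $a \in B$, mirroring the structure of the lemmas. If $a \in A$, the first lemma gives $ab \le p_1$, and the second lemma gives $ba \le p_3$; taking $i=1$ and $j=3$ settles this case immediately since $p_1 \neq p_3$. If $a \in B$, the first lemma only guarantees $ab \le p_1$ \emph{or} $ab \le p_2$, while the second lemma again gives $ba \le p_3$. In either subcase the forward subword sits in $p_1$ or $p_2$, which is disjoint from the index $3$ carrying the reverse subword, so we may take $j=3$ and $i \in \{1,2\}$ with $i \neq j$. Thus in all cases we obtain $ab \le p_i$ and $ba \le p_j$ with $i \neq j$.

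The second half of the statement — that $a$ and $b$ do not alternate in $p_1p_2p_3$ — should follow by a short counting observation rather than a case analysis. The key point is that $p_1$, $p_2$, and $p_3$ are permutations of the vertex set, so each of $a$ and $b$ occurs exactly once in each $p_i$, hence exactly three times in the concatenation $w = p_1p_2p_3$; the projection $w_{\{a,b\}}$ therefore has length six with three $a$'s and three $b$'s. To show non-alternation I would exhibit, from the indices found above, a consecutive equal pair: having $ab \le p_i$ and $ba \le p_j$ with $i \neq j$ forces, regardless of the order of $i$ and $j$, a repeated letter at the junction of the two blocks in $w_{\{a,b\}}$. Concretely, if $i < j$ the contributions read $\ldots ab \ldots ba \ldots$ giving a factor $bb$ across the two blocks, and if $j < i$ they read $\ldots ba \ldots ab \ldots$ giving a factor $aa$; either way $w_{\{a,b\}}$ contains two identical adjacent letters and so is not alternating.

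I do not expect a serious obstacle here, as the corollary is essentially a repackaging of the lemmas. The only point requiring mild care is confirming that a same-letter adjacency in the projection $w_{\{a,b\}}$ is enough to rule out alternation; this is immediate from the definition of alternation given in the preliminaries (an alternating projection is exactly a string of the form $ababa\cdots$ or $baba\cdots$, in which no letter is ever immediately repeated). A secondary bookkeeping concern is handling the middle block $p_2$ when the forward occurrence is in $p_1$ or $p_3$ — but since the reverse occurrence is pinned to $p_3$ and the forward one to $p_1$ or $p_2$, the junction argument above covers every index configuration, so no additional cases arise.
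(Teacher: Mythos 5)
Your proposal is correct and matches the paper's (implicit) argument: the corollary is stated as an immediate consequence of the two preceding lemmas, with the forward occurrence $ab$ pinned to $p_1$ or $p_2$ and the reverse occurrence $ba$ pinned to $p_3$, so the indices are automatically distinct. Your additional observation that the projection $w_{\{a,b\}}$ decomposes into three two-letter blocks, and that alternation would force all three blocks to agree, correctly justifies the final sentence of the statement, which the paper leaves unargued.
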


\begin{lemma}\label{common_ancestor}
	Let $T=(A \cup B, E)$ be a $k$-ary tree. If neither $a$ is an ancestor of $b$ nor $b$ is an ancestor of $a$ in $T$, then $a b \le p_1$ and $b a\le p_2$. In otherwords, $a$ and $b$ do not alternate in the word $p_1p_2p_3$.
\end{lemma}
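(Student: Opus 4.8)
The plan is to pivot on the lowest common ancestor of $a$ and $b$. Write $c$ for this vertex, and let $u$ and $v$ be the two (necessarily distinct) children of $c$ that lie on the paths from $c$ to $a$ and from $c$ to $b$; after possibly interchanging the names of $a$ and $b$, assume $u<v$. Then $a$ lies in the subtree rooted at $u$ and $b$ lies in the subtree rooted at $v$, and these subtrees are disjoint. I would aim to prove $ab\le p_1$ and $ba\le p_2$; once this is in hand, the restriction of $p_1p_2p_3$ to $\{a,b\}$ begins with $abba$, so $a$ and $b$ cannot alternate, exactly as the conclusion is drawn in Corollary \ref{ancestor}.

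The engine is the same step-by-step tracking of a two-letter subword that drives the two preceding lemmas, but now anchored at the expansion of $c$ rather than along a single path. The clean base case is that, for every vertex of $T$, all four replacement rules of Algorithm \ref{algo-1} write its children in increasing order of label in $p_1$ and in decreasing order in $p_2$; hence at the moment $c$ is processed one already has $uv\le p_1$ and $vu\le p_2$. It then remains to propagate this sibling order down to $a$ and $b$, i.e.\ to show that each subsequent expansion of a vertex lying strictly inside the $u$-subtree (resp.\ the $v$-subtree) inserts its new children on the same side of the $u/v$ boundary as the expanded vertex. For $p_2$ this propagation is routine: its odd-rule and both even-rules deposit the children immediately after the expanded vertex or after its parent, both of which already lie in the correct subtree, so no letter ever migrates across the boundary.

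The delicate point, and the step I expect to be the main obstacle, is the $p_1$-rule for a non-leftmost even vertex. Unlike every other rule, it does not place the children next to their parent: it appends $a_1a_2\cdots a_k$ after the whole block $a\,b_{l+1}\cdots b_k\,a'\,D_{b_2}\cdots D_{b_{l-1}}$, that is, after the parent $a'$ and the larger siblings of $a$. This long-range insertion is precisely what can drag a deep vertex of the smaller subtree rightward past a shallow vertex of the larger subtree, and hence what threatens the invariant $ab\le p_1$. I would handle it by a case split on the parity of $c$ and on whether $u$ and $v$ are leftmost children of $c$, checking in each configuration that the appended block still lands in the region occupied by the $u$-subtree in $p_1$. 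In the residual configurations, where one of $a,b$ is itself a child of $c$ while the other sits deeper in a non-leftmost subtree, the $p_1$-versus-$p_2$ comparison is the most fragile, and there I would, if necessary, read the non-alternation off the order that $p_3$ imposes on the two subtrees instead. Verifying that the boundary is respected in every case yields the stated conclusion and, with it, the non-alternation of $a$ and $b$.
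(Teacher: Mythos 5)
Your plan follows the same route as the paper's proof: anchor at the closest common ancestor $c$, observe that every replacement rule of Algorithm \ref{algo-1} writes children in increasing order of label in $p_1$ and in decreasing order in $p_2$ (so $uv\le p_1$ and $vu\le p_2$ the moment $c$ is expanded), and propagate this two-letter subword down the two branches level by level. You have also put your finger on exactly the right difficulty---the $p_1$-rule for a non-leftmost even vertex---but the proposal stops where the work starts. What actually makes the propagation go through is a feature of that rule which you mention but never exploit: when $b_l$ is expanded, its children are appended after the block $b_l b_{l+1}\cdots b_k\, a'\, D_{b_2}\cdots D_{b_{l-1}}$, and the trailing $D$-blocks contain the already-inserted children of all smaller non-leftmost siblings (while the leftmost sibling's children sit immediately after that sibling, hence to the left of $b_l$). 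So although the children of $b_l$ jump past $a'$ and past the larger siblings, they still land to the right of the children of every smaller sibling; it is this level-by-level ordering ($a_ib_i\le p_1$ and $b_ia_i\le p_2$ for each $i$) that the paper verifies in its four cases ($c$ odd or even, $a_1$ leftmost or not) and that your proposal defers to ``checking in each configuration.''

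Your remark about the residual configuration is not a side worry---it is a genuine failure of the stated conclusion, and one the paper's own proof mishandles. In the tree of Appendix \ref{tree-demo}, take the pair $\{6,10\}$: the closest common ancestor is $1$, the vertex $6$ is a child of $1$, and $10$ lies below the non-leftmost child $4$. One computes that $6$ precedes $10$ in both $p_1$ and $p_2$, so no assignment of the names $a,b$ gives $ab\le p_1$ together with $ba\le p_2$; non-alternation holds only because $p_3$ reverses the pair. Thus your fallback to $p_3$ is not optional there but mandatory, and the lemma's literal claim (as well as the paper's final case $k_2<k_1$, which asserts that all descendants of $a_{k_2+1}$ stay to the left of $b$ in $p_1$) must be weakened to the non-alternation conclusion. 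To turn your proposal into a proof you still need to (i) carry out the level-by-level verification using the $D$-block mechanism above, and (ii) isolate the configuration in which one vertex is a child of $c$ and prove the $p_3$ separation explicitly rather than invoking it ``if necessary.''
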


\begin{proof}
	
	Note that the set $A$ denotes the vertices of $T$ with an even label, and the set $B$ denotes the vertices of $T$ with an odd label vertex.
	
	Suppose $a$ and $b$ are siblings, where $a < b$ and $c$ is the parent. As per Algorithm \ref{algo-1}, $c$ is visited first, and as such, we get $a b \le p_1 $ and $b a \le p_2$ whenever $c$ is of odd or even label.
	
	Assume $a$ and $b$ are not siblings. The root $a_r$ is a common ancestor of $a$ and $b$. However, consider $c$ the closest common ancestor of $a$ and $b$. 
	
	For $ 1\le i \le k_1-1$, let $a_i$ be parent of $a_{i+1}$ and let $c$ be parent of $a_1$ and $a_{k_1}$ be parent of $a$. So also, for $ 1\le i \le k_2-1$, let $b_i$ be parent of $b_{i+1}$ and let $c$ be parent of $b_1$ and $b_{k_2}$ be parent of $b$. Note that $a_1 < b_1$.
	
	Here, $c$ is visited first as per the algorithm.
	
	\begin{itemize}
		\item Suppose $c \in A$.  Here, $a_1, b_1 \in B$ and we have $a_1 b_1 \le c a_1 b_1 \le p_1$ and $b_1 a_1 \le c b_1 a_1 \le p_2$. Since $a_1 < b_1$, $a_1$ is visited first, and we get $ a_2 b_1 \le a_2 a_1 b_1 \le p_1$ and $b_1 a_2 \le b_1 a_2 a_1 \le p_2$. Then $b_1$ is visited, we have $ a_2 b_2 \le a_2 b_2 b_1  \le p_1$ and $b_2 a_2 \le b_2 b_1 a_2 \le p_2$.
		
		\item Suppose $c \in B$. If $a_1$ is the leftmost child of $c$, then we have $a_1 b_1\le a_1b_1c \le p_1$ and $b_1 a_1 \le b_1a_1 \le p_2$.  Since $a_1 < b_1$, $a_1$ is visited first, and we get $ a_2 b_1 \le a_1 a_2 b_1 c \le p_1$ and $b_1 a_2 \le b_1 a_1 c a_2 \le p_2$. Then $b_1$ is visited, for which we get $ a_2 b_2 \le a_1 a_2 b_1 c b_2  \le p_1$ and $b_2 a_2 \le b_1 b_2 a_1 c a_2 \le p_2$.
		
		If $a_1$ is not the leftmost child of $c$, then we have $a_1b_1 \le a_1b_1c \le p_1$ and $b_1a_1 \le b_1a_1c\le p_2$. As $a_1<b_1$, $a_1$ is visited first, we get $b_1a_2 \le a_1b_1ca_2 \le p_1$ and $b_1a_1a_2c \le p_2$. Then $b_1$ is visited and we get $a_2 b_2 \le a_1 b_1 c a_2 b_2 \le p_1$ and $b_2a_2 \le b_1 b_2 a_1 a_2 c \le p_2$.
	\end{itemize}
	
	Subsequently, in the above two cases, we arrive at a common conclusion that is $a_k b_k \le p_1$ and $b_k a_k \le p_2$, where $k = \min \{k_1,k_2\}$. Note that $a_k$ and $b_k$ are at equal height from $c$. Also, $a = a_{k_1 +1}$ and $b = b_{k_2+1}$.
	\begin{itemize}
		\item Suppose $k = k_1 = k_2$, we have $ab \le p_1$ and $ba \le p_2$.
		\item Suppose $k_1 < k_2$. We have $a_{k_1+1}b_{k_1+1} \le p_1$ and $b_{k_1+1}a_{k_1+1} \le p_2$. Since $a$ is not a parent of any descendants of $b_{k_1+1}$, all descendants of $b_{k_1+1}$ will occur on the right side of $a$ in $p_1$ and occur on the left side of $a$ in $p_2$. Therefore, $a b \le p_1$ and $b a \le p_2$.
		\item Suppose $k_2 < k_1$. We have $a_{k_2 +1} b_{k_2+1} \le p_1$ and $b_{k_2 +1} a_{k_2+1} \le p_2$. Since $b_{k_2 +1}$ is not a parent to any descendants of  $a_{k_2+1}$, all descendants of $a_{k_2+1}$ will occur on the left side of $b$ in $p_1$ and will occur on the right side of $b$ in $p_2$. Thus, $ab \le p_1$ and $ba \le p_2$. 
	\end{itemize}

	Hence, $ab \le p_2 $ and $ba \le p_3$. \qed

\end{proof}

\begin{theorem}\label{complete}
	The word $p_1p_2p_3$ obtained by Algorithm 1 represents a $k$-ary tree $T=(A \cup B, E)$ permutationally.
\end{theorem}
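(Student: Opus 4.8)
The plan is to establish two things: that $p_1,p_2,p_3$ are genuine permutations of $V$, so that $w := p_1p_2p_3$ is a $3$-uniform word of the form that certifies permutational representability; and that, for distinct $a,b\in V$, the letters $a$ and $b$ alternate in $w$ exactly when $\overline{ab}\in E$, i.e.\ exactly when one of $a,b$ is the parent of the other. For the first point I would argue by induction over the while-loop that every vertex is inserted into each $p_i$ exactly once and is never deleted: each replacement rule rewrites a contiguous subword by inserting only the (hitherto unplaced) children of the vertex currently processed, while leaving that vertex in place. This also records the monotonicity already noted before the theorem, namely that each $p_i$ at one step is a subword of $p_i$ at the next; consequently the relative order of any two letters, once both are present, is frozen for the remainder of the run.

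Next I would organise the pairs $\{a,b\}$ by the trichotomy available in any rooted tree: either (a) one is the parent of the other; or (b) one is a proper ancestor (a grandparent or higher) of the other; or (c) their lowest common ancestor is distinct from both, which covers siblings and, more generally, all incomparable pairs. The non-edges are exactly (b) and (c). Here I would invoke the preceding results directly: Corollary~\ref{ancestor} shows that in case (b) the pair does not alternate in $w$, and Lemma~\ref{common_ancestor} does the same in case (c). Thus every non-adjacent pair fails to alternate, and it remains only to prove the converse for the adjacency case (a), which is precisely the part not covered by the earlier lemmas.

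For case (a), let $a$ be the parent of $b$. By the monotonicity above, the relative order of $a$ and $b$ in the final $w$ is already fixed at the moment $a$ is dequeued and $b$ is inserted ($a$ being present from the time its own parent was processed, or from the initialisation when $a=a_r$). I would then read off the six replacement rules according to the parity of $a$. If $a$ is odd, each of the three rules places the children of $a$ to the left of $a$, so $b$ precedes $a$ in all of $p_1,p_2,p_3$ and $w_{\{a,b\}}=bababa$. If $a$ is even, then in each of the two sub-cases ($a$ a leftmost child of its parent, or not) the children of $a$ are placed to the right of $a$, so $a$ precedes $b$ in all three permutations and $w_{\{a,b\}}=ababab$. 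In either case $a$ and $b$ alternate, completing the characterisation and hence the proof that $w$ represents $T$.

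The hard part, I expect, is the bookkeeping that makes the third paragraph rigorous, rather than the arithmetic of reading the rules. One must confirm that the patterns the algorithm proposes to rewrite actually occur contiguously at the time they are needed: for instance, that the block $aa'$ is genuinely adjacent in $p_2$ when $a$ is a leftmost even child (which holds because the odd parent $a'$ had earlier placed its children in decreasing order immediately before itself in $p_2$), and that the long block $a\,b_{l+1}\cdots b_k\,a'\,D_{b_2}\cdots D_{b_{l-1}}$ appears intact in $p_1$ for a non-leftmost even child. This calls for maintaining, through the BFS, an invariant describing the contiguous segment each vertex controls in each $p_i$, together with the observation that every later insertion adds only as-yet-unplaced descendants and so can never be slipped between an already-fixed pair $a,b$. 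Once this invariant is secured, the parity case analysis closes the argument.
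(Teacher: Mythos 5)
Your proposal is correct and follows essentially the same route as the paper: non-adjacent pairs are dispatched by citing Corollary~\ref{ancestor} and Lemma~\ref{common_ancestor}, and parent--child pairs are shown to alternate because the parent is dequeued first and its children land on the same side of it in all three permutations. Your version is somewhat more careful than the paper's, in that it makes explicit the permutation/subword-monotonicity invariant and the parity case analysis that the paper only asserts.
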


\begin{proof}
	We claim that for all $a\in A$ and $b \in B$, $\overline{a b} \in E$ if and only if $a$ and $b$ alternate in $p_1p_2p_3$, i.e., $ a b \le p_i$, for all $i=1,2,3$.
	
	Suppose $a$ and $b$ are adjacent in $T$. We prove this part in the following two cases:
	\begin{itemize}
		\item Case 1: $a$ is a parent of $b$. As per the algorithm, $a$ is visited first and so $a b \le p_1$, $a b \le p_2$ and $a b \le p_3$. Once $a$ and $b$ are inserted in the new updated word, $a b $ is always a subword of the final permutations $p_1,p_2$ and $p_3$.
		\item Case 2: $b$ is a parent of $a$. As per the algorithm, $b$ is visited first and we get $a b \le p_1$ , $a b \le p_2$ and $a b \le p_3$. Once $a$ and $b$ are inserted in the new updated word, $ a b$ is always a subword of the final permutations $p_1,p_2$ and $p_3$.
	\end{itemize}
	Hence, $a b \le p_i$, for all $i=1,2,3$.
	
	Conversely, suppose $a$ and $b$ are not adjacent in $T$. We need to show that there exist two distinct permutations such that $ab \le p_i$ and $ba \le p_j$, for some $i$ and $j$. We deal with this part in the following two cases:
	\begin{itemize}
		\item Case 1: Without loss of generality, assume that $a$ is an ancestor (not parent) of $b$ then from Corollary \ref{ancestor}, we have $ab \le p_i$ and $ba \le p_j$, for some $i$ and $j$. 
		\item Case 2: We know that $a_r$ is a common ancestor of $a$ and $b$. Let $c$ be their closest common ancestor. By Lemma \ref{common_ancestor}, we have that $ab \le p_i$ and $ba \le p_j$, for some $i$ and $j$.
	\end{itemize}
	Hence, $p_1p_2p_3$ represents a $k$-ary tree $T$ permutationally. \qed
\end{proof}

\begin{corollary}
	The \textit{prn} of trees which are having $S$ as the induced subgraph is three.
\end{corollary}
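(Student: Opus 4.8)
The plan is to combine the two structural facts already established in this section. By Remark~\ref{thm-10vincent}, a tree is a permutation graph---and hence lies in $\mathcal{R}^p_{\le 2}$, giving \textit{prn} at most two---\emph{if and only if} it does not contain $S$ as an induced subgraph. Consequently, if a tree $T$ does contain $S$ as an induced subgraph, it is \emph{not} a permutation graph, so by the characterization of $\mathcal{R}^p_{\le 2}$ recorded in the opening theorem of Section~\ref{permutation_graph}, $T \notin \mathcal{R}^p_{\le 2}$. This yields the lower bound $\mathcal{R}^p(T) \ge 3$.

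For the matching upper bound, I would invoke Theorem~\ref{complete}: Algorithm~\ref{algo-1} takes any $k$-ary tree $T=(A\cup B,E)$ and produces three permutations $p_1,p_2,p_3$ on $V(T)$ whose concatenation $p_1p_2p_3$ represents $T$. Since $p_1p_2p_3$ is by construction a word of the form (permutation)(permutation)(permutation), this exhibits $T$ as permutationally $3$-representable, so $\mathcal{R}^p(T)\le 3$. Combining the two bounds gives $\mathcal{R}^p(T)=3$ for every tree $T$ having $S$ as an induced subgraph.

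The argument is essentially a two-line assembly, so I do not anticipate a genuine obstacle; the only point requiring care is the logical direction of Remark~\ref{thm-10vincent}. As stated, the remark asserts that $S$-free trees \emph{are} permutation graphs, and the reasoning for the lower bound needs the contrapositive-style fact that containing $S$ forces the tree \emph{out} of the permutation-graph class. This is exactly how the text already used it to conclude $\mathcal{R}^p(S)\ge 3$ immediately after the remark, so the same reasoning applies verbatim to any tree containing $S$: an induced subgraph that is not a permutation graph prevents the whole graph from being a permutation graph, since the class of permutation graphs is hereditary under induced subgraphs. Thus the lower bound transfers from $S$ to every tree containing it.
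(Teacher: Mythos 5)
Your proposal is correct and matches the paper's (implicit) argument exactly: the lower bound $\mathcal{R}^p(T)\ge 3$ comes from $S$ being a non-permutation graph together with the hereditary nature of permutation graphs and the characterization of $\mathcal{R}^p_{\le 2}$, and the upper bound comes from Theorem~\ref{complete}. Your own caveat about the direction of Remark~\ref{thm-10vincent} is well taken and correctly resolved, so nothing further is needed.
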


In the following result we summarize the \textit{prn} of the class of trees.

\begin{theorem}
	Let $T$ be a tree on $n$ vertices.
	\begin{enumerate}
		\item If $n =1$ or $2$ then $\mathcal{R}^p(T) = 1$.
		\item For $n \ge 3$, if $T$ does not contain $S$ as an induced subgraph then $\mathcal{R}^p(T) = 2$.
		\item For $n \ge 3$, if $T$ contains $S$ as an induced subgraph then $\mathcal{R}^p(T) = 3$.   
	\end{enumerate}
\end{theorem}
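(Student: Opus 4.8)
The plan is to assemble the three cases from results already established in the excerpt, supplying the matching lower bound in each case, so that almost all of the content is bookkeeping rather than new argument.

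For the first part I would observe that a tree on one or two vertices is a complete graph ($K_1$ or $K_2$, the latter being a single edge). Since $\mathcal{R}^p_1$ is precisely the class of complete graphs, as recorded in the introduction, it follows immediately that $\mathcal{R}^p(T) = 1$.

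For the second part I would assume $n \ge 3$ and that $T$ is $S$-free. Remark \ref{thm-10vincent} already supplies the upper bound $\mathcal{R}^p(T) \le 2$, because such a tree is a permutation graph. It then remains only to rule out $\mathcal{R}^p(T) = 1$. As $\mathcal{R}^p_1$ consists exactly of complete graphs, it suffices to note that a tree on $n \ge 3$ vertices has only $n-1$ edges, strictly fewer than $\binom{n}{2}$, so $T \ne K_n$ and hence $\mathcal{R}^p(T) > 1$. Combining the two bounds yields $\mathcal{R}^p(T) = 2$.

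For the third part I would assume $n \ge 3$ and that $T$ contains $S$ as an induced subgraph. The upper bound $\mathcal{R}^p(T) \le 3$ is Theorem \ref{complete}, which shows that the three permutations returned by Algorithm \ref{algo-1} represent any $k$-ary tree permutationally. For the lower bound I would invoke the forbidden-induced-subgraph description of permutation graphs underlying Remark \ref{thm-10vincent}: since $T$ contains $S$, it is not a permutation graph, so by the characterization of $\mathcal{R}^p_{\le 2}$ as the class of permutation graphs we obtain $\mathcal{R}^p(T) > 2$. Together these force $\mathcal{R}^p(T) = 3$, which is exactly the content of the corollary immediately preceding the theorem.

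Since everything reduces to results proved earlier, the only genuine content is making the lower bounds airtight, and the delicate point is the lower bound in the third part: one must argue that the presence of $S$ as an induced subgraph genuinely obstructs $2$-permutational-representability. I would justify this either directly, through the forbidden-induced-subgraph characterization of permutation graphs (Remark \ref{thm-10vincent} via Vincent's theorem), or equivalently through the monotonicity of $\mathcal{R}^p$ under passing to induced subgraphs, combined with the already-computed value $\mathcal{R}^p(S) = 3$; the latter route makes the implication $S \le_{\mathrm{ind}} T \implies \mathcal{R}^p(S) \le \mathcal{R}^p(T)$ explicit, which is the cleanest way to close the case.
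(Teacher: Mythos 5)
Your proposal is correct and matches the paper's (implicit) argument: the paper states this theorem without a written proof precisely because it is assembled from the preceding results in exactly the way you describe — completeness of $K_1$ and $K_2$ for part 1, Remark \ref{thm-10vincent} plus non-completeness for part 2, and Theorem \ref{complete} together with the fact that a tree containing $S$ is not a permutation graph (equivalently, heredity of the \textit{prn} plus $\mathcal{R}^p(S)=3$) for part 3. No further comment is needed.
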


\section{Even Cycles}

In this section, we construct three permutations on the vertices of a cycle $C_n$ for an even number $n \ge 6$ and show that $\mathcal{R}^p(C_n) = 3$. In this connection, first we construct two permutations of a special type (cf. Remark \ref{unique}) whose concatenation represents a path permutationally. These permutations will be extended to give the desired permutations for an even cycle $C_n$.

Let $P_n$ be the path on $n$ vertices. For $n = 1$ and $2$, being complete graphs, $\mathcal{R}^p(P_n) = 1$. In view of Remark \ref{thm-10vincent}, all paths are permutation graphs. Hence, for $n \ge 3$,  $\mathcal{R}^p(P_n) = 2$. In the following, we construct words consisting of two permutations representing $P_n$ permutationally. 
First, we state the following remark from a result in \cite{KM_KVK_2}. 
\begin{remark}\label{letter_right}
	Let $G =( A \cup B, E)$ be a bipartite graph and $p$ be a permutation occurring in a word representing $G$ permutationally. If any vertex adjacent to a vertex $a \in A$ in $G$ appears on the right side (left side) of $a$ in $p$, then every other	vertex adjacent to $a$ in $G$ also appears on the right side (left side) of $a$ in $p$.	
\end{remark}

For $n \ge 3$, suppose $\{a_1,a_2,\ldots,a_n\}$ is the vertex set of $P_n$ where $a_i$ is adjacent to $a_{i+1}$, for $1 \le i \le n-1$. Note that $\{A, B\}$, where $A=\{a_i: i \ \text{is even}\}$ and $B =\{a_j: j \ \text{is odd}\}$, is the bipartition of $P_n$. 	For $a_i \in A$ and $a_j \in B$, if $a_i$ and $a_j$ are adjacent, using Remark \ref{letter_right}, assume that $a_ia_j \le p_i$ in any permutation $p_i$ occurring in a word representing $P_n$ permutationally. As $P_n$ is not a complete graph, any word representing $P_n$ must contain at least two permutations. On the input $P_n$, Algorithm \ref{algo-2} produces two permutations on the vertices of $P_n$. 

\begin{algorithm}[!htb]
	\caption{Generating permutations for a path.}
	\label{algo-2}
	\KwIn{A path $P_n$.}
	\KwOut{Two permutations on the vertices of $P_n$.}
	Run Algorithm \ref{algo-1} on $P_n$ with $a_1$ as the root. \\
	\Return $p_2, p_3$. 
\end{algorithm}

We now prove the correctness of Algorithm \ref{algo-2}.

\begin{lemma}\label{path}
	The word $p_2p_3$ represents $P_n$ permutationally.
\end{lemma}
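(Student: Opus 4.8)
The plan is to use the fact that, rooted at $a_1$, the path $P_n$ is a \emph{linear} tree: each vertex except the leaf $a_n$ has a single child, and for $i<j$ the vertex $a_i$ is an ancestor of $a_j$, being its parent exactly when $j=i+1$. In particular $P_n$ has no pair falling under the ``closest common ancestor'' analysis of Lemma \ref{common_ancestor}; every non-edge comes from a proper ancestor--descendant pair. Since $p_2$ and $p_3$ (as produced by Algorithm \ref{algo-1} through Algorithm \ref{algo-2}) are permutations of the whole vertex set, $p_2p_3$ is a concatenation of two permutations, so it suffices to show that $a_i$ and $a_j$ alternate in $p_2p_3$ if and only if $j=i+1$.

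Two of the three subcases are inherited from the tree analysis. If $a_ia_{i+1}$ is an edge, then Theorem \ref{complete} (applied with the even endpoint as $a$ and the odd endpoint as $b$) gives that the two endpoints occur in the same order in $p_2$ and in $p_3$, so they alternate in $p_2p_3$. If $j\ge i+2$, so that $a_i$ is a proper ancestor of $a_j$, then the lemma asserting $ba\le p_3$ for every ancestor $a$ of $b$ yields $a_ja_i\le p_3$, i.e.\ the descendant precedes the ancestor in $p_3$.

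What remains, and what carries the proof, is to fix the relative order of a non-adjacent pair in $p_2$. I would read off the shape of $p_2$ directly from Algorithm \ref{algo-1} on a path. The two relevant update rules are: processing an odd vertex $a_{2m-1}$ places its even child $a_{2m}$ immediately to its left, while processing the even vertex $a_{2m}$ (which, being the unique child of its parent, is always the leftmost child) appends its odd child $a_{2m+1}$ just to the right of the pair $a_{2m}a_{2m-1}$. A straightforward induction on the BFS order then gives
\[
p_2=a_2a_1\,a_4a_3\,a_6a_5\cdots,
\]
the concatenation of the blocks $a_{2m}a_{2m-1}$ in increasing order of $m$, with a trailing $a_n$ when $n$ is odd. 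Each block holds the two consecutive indices $2m-1,2m$, so two letters whose indices differ by at least $2$ lie in distinct blocks, the smaller index in the earlier one; hence $a_ia_j\le p_2$ whenever $j\ge i+2$. Combined with $a_ja_i\le p_3$, the word $p_2p_3$ restricted to $\{a_i,a_j\}$ is $a_ia_ja_ja_i$, which is not alternating, settling the non-adjacent case. The step I expect to be the main obstacle is exactly this last one for pairs whose ancestor is even: the general tree lemma only guarantees $a_ia_j\le p_1$ for an even ancestor (and $a_ia_j\le p_2$ only for an odd one), so discarding $p_1$ cannot be justified by quoting it, and one genuinely has to analyse how Algorithm \ref{algo-1} builds $p_2$ on a path and verify the block form above, after which the alternation conditions follow by inspection.
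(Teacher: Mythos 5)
Your proof is correct, and it takes a leaner route than the paper's. The paper simply writes out both permutations explicitly (for $n$ odd and $n$ even separately) and then verifies alternation by a direct case analysis on the position of each $a_j$ and each index $k$, relegated to an appendix. You instead exploit the fact that the rooted path is a chain, so every non-edge is a proper ancestor--descendant pair and Lemma~\ref{common_ancestor} is never needed; you then import Theorem~\ref{complete} for the edges and the ancestor lemma for the order $a_ja_i\le p_3$, leaving only the relative order in $p_2$ to be established by hand. Your key observation --- that the ancestor lemma only guarantees $ab\le p_1$ when the ancestor is even, so discarding $p_1$ forces an explicit computation of $p_2$ --- is exactly the right diagnosis of where the general tree machinery falls short, and your block decomposition $p_2=a_2a_1\,a_4a_3\cdots$ (with trailing $a_n$ for odd $n$) matches what Algorithm~\ref{algo-1} produces and correctly yields $a_ia_j\le p_2$ for $j\ge i+2$. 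What the paper's approach buys is self-containedness and a uniform template that it reuses verbatim for the even-cycle construction; what yours buys is brevity and a clear separation between what follows from the already-proved tree lemmas and the single genuinely new fact about $p_2$ that the path case requires.
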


\begin{proof}	
	Depending on whether $n$ is odd or even, in two cases, we show that $p_2p_3$ represents $P_n$ permutationally. It is evident from Algorithm \ref{algo-2} that the permutation are as follows:
	\begin{align*}
	p_2 &= \left\{\begin{array}{ll}
		a_2 a_1 a_4 a_3 a_6 a_5\cdots a_{n-1}a_{n-2}a_n, & \text{if $n$ is odd;}\\
		a_2 a_1 a_4 a_3a_6 a_5 \ldots a_{n-2}a_{n-3}a_n a_{n-1}, & \text{if $n$ is even.}
	\end{array} \right.\\
	p_3 &= \left\{\begin{array}{ll}
		a_{n-1} a_n a_{n-3} a_{n-2} \cdots a_6 a_7 a_4 a_5 a_2 a_3 a_1, & \text{if $n$ is odd;}\\
		a_n a_{n-2} a_{n-1} a_{n-4} a_{n-3} \ldots a_6 a_7 a_4 a_5 a_2 a_3 a_1, & \text{if $n$ is even.}
	\end{array} \right.
	\end{align*}
	It can be verified that the word $p_2p_3$ represents $P_n$ permutationally. The details are given Appendix \ref{app_lem4}.\qed
 \end{proof}

\begin{remark}\label{unique}
	In view of Remark \ref{letter_right}, the permutations $p_2$ and $p_3$ constituting a word representing $P_n$ permutationally are unique except for the ordering of even and odd indices.
\end{remark}

For $n \ge 3$, let $C_n$ be the cycle on $n$ vertices, say $\{a_1, \ldots, a_n\}$ such that $a_i$ is adjacent to $a_{i+1}$ (for $1 \le i < n$) and $a_n$ is adjacent to $a_1$. Note that $C_n = P_n \cup \{\overline{a_na_1}\}$ and the notation given for $P_n$ will be followed. If $n$ is even then $C_n$ is called an even cycle. While $C_4$ is a permutation graph, even cycles on at least six vertices are not permutation graphs \cite{Gallai}. For $n \ge 6$, we give Algorithm \ref{algo-3} to produce three permutations on the vertices of $C_n$ whose concatenation represents $C_n$.

\begin{algorithm}[!htb]
	\caption{Generating permutations for an even cycle.}
	\label{algo-3}
	
	\KwIn{An even cycle $C_n$.}
	\KwOut{Three permutations $p_1, p_2, p_3$ of the vertices of $C_n$.}
	Let $p_A$ be a permutation on the vertices of $A \setminus \{a_n\}$.
	
	Let $p_B$ be a permutation on the vertices of $B \setminus \{a_1, a_{n-1}\}$.
	
	Run Algorithm \ref{algo-2} on the path $P_{n-1}$.
	\Comment{$p_2$ and $p_3$ are returned.}
	
	$p_2 \gets a_n p_2$\\
	$p_3 \gets a_n p_3$\\
	$p_1 \gets p_A p_B a_n a_1 a_{n-1}$ \\
	\Return $p_1, p_2, p_3$ 
\end{algorithm}

\begin{theorem}\label{even_cycle}
	For $n \ge 6$, if $n$ is even then  $\mathcal{R}^p(C_n) = 3$.
\end{theorem}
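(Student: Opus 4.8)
The plan is to establish the two bounds $\mathcal{R}^p(C_n) \ge 3$ and $\mathcal{R}^p(C_n) \le 3$ and combine them. For the lower bound, observe that $C_n$, being bipartite, is a comparability graph: orienting every edge from $A$ to $B$ gives a transitive orientation, so $\mathcal{R}^p(C_n)$ is finite. By \cite{Gallai}, an even cycle on at least six vertices is not a permutation graph, and since the class $\mathcal{R}^p_{\le 2}$ is exactly the class of permutation graphs, we get $C_n \notin \mathcal{R}^p_{\le 2}$; that is, $\mathcal{R}^p(C_n) \ge 3$.

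For the upper bound, I would prove that the word $p_1 p_2 p_3$ returned by Algorithm \ref{algo-3} represents $C_n$. The argument is organised around the fact that $C_n$ arises from the path $P_{n-1}$ on $a_1, \ldots, a_{n-1}$ by adjoining the vertex $a_n$ together with the two edges $\overline{a_{n-1} a_n}$ and $\overline{a_n a_1}$. By Lemma \ref{path} the permutations $p_2, p_3$ produced for $P_{n-1}$ (before $a_n$ is prepended) already represent $P_{n-1}$; and prepending the single letter $a_n$ to each of them leaves $(p_2)_{\{a_i, a_j\}}$ and $(p_3)_{\{a_i, a_j\}}$ unchanged for all $a_i, a_j \in \{a_1, \ldots, a_{n-1}\}$. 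Hence $p_2 p_3$ still encodes exactly the adjacencies of $P_{n-1}$ among these vertices, and it only remains to account for $p_1$ and for the new vertex $a_n$.

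I would split the verification into three groups of pairs. For a non-edge $\{a_i, a_j\}$ of $P_{n-1}$, the word $(p_2 p_3)_{\{a_i, a_j\}}$ already contains two equal consecutive letters, and prepending the length-two word $(p_1)_{\{a_i, a_j\}}$ cannot remove them; thus $a_i, a_j$ do not alternate in $p_1 p_2 p_3$. This disposes of all non-edges among $a_1, \ldots, a_{n-1}$, including same-part pairs such as $\{a_1, a_{n-1}\}$. For an edge $\{a_i, a_j\}$ of $P_{n-1}$ with $a_i \in A$, the even-before-odd convention of Remark \ref{letter_right} gives $(p_2)_{\{a_i, a_j\}} = (p_3)_{\{a_i, a_j\}} = a_i a_j$; since $a_i$ lies in $p_A$ while $a_j$ lies in $p_B$ or in the tail of $p_1 = p_A\, p_B\, a_n a_1 a_{n-1}$ (in either case after all of $p_A$), we also have $(p_1)_{\{a_i, a_j\}} = a_i a_j$, so the full subword is the alternating $a_i a_j a_i a_j a_i a_j$. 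Finally, for the pairs involving $a_n$: every vertex other than $a_1, a_{n-1}, a_n$ lies in $p_A$ or $p_B$ and therefore precedes $a_n$ in $p_1$ while following it in $p_2, p_3$, which $a_n$ heads; this yields $a_j a_n a_n a_j a_n a_j$, so $a_n$ alternates with no such $a_j$. For its two cycle-neighbours, the tail $a_n a_1 a_{n-1}$ gives $(p_1)_{\{a_n, a_1\}} = a_n a_1$ and $(p_1)_{\{a_n, a_{n-1}\}} = a_n a_{n-1}$, matching $p_2, p_3$ and producing full alternation.

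The step I expect to be the main obstacle is this last group: one must recognise that prepending $a_n$ to $p_2, p_3$ while parking $a_n, a_1, a_{n-1}$ at the very end of $p_1$ is exactly what forces $a_n$ to begin its alternations only with $a_1$ and $a_{n-1}$, and to be blocked against all other vertices — thereby realising precisely the two new cycle edges without disturbing any edge or non-edge of $P_{n-1}$. Once this interplay between the tail of $p_1$ and the explicit forms of $p_2, p_3$ from Lemma \ref{path} is pinned down, the remaining checks are routine subword computations, and combining the two bounds gives $\mathcal{R}^p(C_n) = 3$.
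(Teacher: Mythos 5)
Your proposal is correct and follows essentially the same route as the paper: the lower bound comes from even cycles on at least six vertices not being permutation graphs, and the upper bound verifies that the word $p_1p_2p_3$ produced by Algorithm \ref{algo-3} (with $p_2,p_3$ from the path $P_{n-1}$ prepended by $a_n$, and $p_1 = p_A p_B a_n a_1 a_{n-1}$) represents $C_n$. If anything, your case analysis is more explicit than the paper's, which only checks the pairs involving $a_n$ in detail and leaves the preservation of the $P_{n-1}$ adjacencies and non-adjacencies implicit.
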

\begin{proof}
	Suppose $C_n$ is an even cycle on $n \ge 6 $ vertices and  $\{a_1,a_2,\ldots,a_n\}$ is the vertex set of $C_n$ where $a_i$ is adjacent to $a_{i+1}$ for $1 \le i \le n-1$ and $a_1$ is adjacent to $a_n$. Note that $C_n \setminus \{a_n\} = P_{n-1}$, where $n-1$ is odd. By Remark \ref{unique}, the following $p_2$ and $p_3$ are the only permutations constituting a word with two permutations representing $P_{n-1}$ permutationally.
	\begin{align*}
		p_2 &= a_2 a_1 a_4 a_3 a_6 a_5 \cdots a_{n-2}a_{n-3}a_{n-1}\\
		p_3 &= a_{n-2}a_{n-1} a_{n-4} a_{n-3} \cdots a_6 a_7 a_4 a_5 a_2 a_3 a_1
	\end{align*}	

	As $n$ is even and the vertex $a_n$ is adjacent to $a_{1}$ and $a_{n-1}$, the vertex $a_n$ must precede $a_1$ and $a_{n-1}$ in permutations constituting a word representing $C_n$ permutationally. Accordingly, $p_2$ and $p_3$ can be updated to the following permutations:
	\begin{align*}
		p_2 &= a_n a_2  a_1 a_4 a_3 a_6 a_5 \cdots a_{n-2}a_{n-3}a_{n-1}\\
		p_3 &= a_n a_{n-2} a_{n-1} a_{n-4} a_{n-3} \cdots a_6 a_7 a_4 a_5 a_2 a_3 a_1 
	\end{align*}
	
	Note that $p_2p_3$ does not represent $C_n$ as the vertices $a_2,a_3, a_4, \ldots, a_{n-3},a_{n-2}$ are not adjacent to $a_n$, but they alternate in $p_2p_3$. 
	
	The permutation $p_1$ takes care of the non-adjacency of $a_n$ with the vertices $a_2,a_3, a_4, \ldots, a_{n-3},a_{n-2}$.	
	$$p_1 = p_A p_B a_n a_1 a_{n-1}$$	
	where $p_A$ and $p_B$ are permutations on the vertices of  $A \setminus \{a_n\}$ and $B \setminus \{a_1,a_{n-1}\}$, respectively. For $2 \le k \le n-2$, while $a_k a_n \le p_1$, we have $a_na_k \le p_2$ and $a_na_k \le  {p_3}$. Hence, $p_1p_2p_3$ represents $C_n$ permutationally.   \qed
	
\end{proof}

\section{Book Graphs}

The Cartesian product of two graphs $G = (V, E)$ and $H=(V', E')$, denoted by $G \square H $, is a graph with the vertex set $V \times V'$ and $(a,b), (c,d) \in V \times V'$ are adjacent if $a = c$ and $\overline{b d} \in E'$, or $b = d$ and $\overline{a c} \in E$. The Cartesian product of two word-representable graphs is word-representable (cf. \cite{kitaev15mono}). In \cite[Theorem 7]{broere_2019},  it was proved that  if $G$ is a word-representable graph with $\mathcal{R}(G)= k$, then $\mathcal{R}(G \square K_2) \le k+1$, where $K_2$ is the complete graph on two vertices. 

A book graph, denoted by $B_m$, is the Cartesian product of a complete bipartite graph $K_{1, m}$ (also called a star graph) and $K_2$. For $m \ge 2$, since $\mathcal{R}(K_{1,m})= 2$, we have $\mathcal{R}(B_m) \le 3$. Note that $B_1 = C_4$, and $B_2$ is the ladder graph on six vertices. It was shown in \cite{kitaev13} that the $\mathcal{R}(B_1) = \mathcal{R}^p(B_1) = \mathcal{R}(B_2) = \mathcal{R}^p(B_2) = 2$.

In the following, we show that the representation number of book graphs on at least eight vertices is three.

\begin{figure}
	\centering
	\begin{minipage}{.4\textwidth}
		\centering
		\begin{tikzpicture}[scale=0.85]
	\vertex (1) at (0,2.3) [label=above:$1$] {};  
	\vertex (2) at (-1.5,3) [label=above:$2$] {}; 
	\vertex (3) at (1,3) [label=above:$3$] {}; 
	\vertex (4) at (1.5,3) [label=above:$4$] {}; 
	\vertex (5) at (0,1.3) [label=below:$5$] {}; 
	\vertex (6) at (-1.5,0.7) [label=below:$6$] {}; 
	\vertex (7) at (1,0.7) [label=below:$7$] {}; 
	\vertex (8) at (1.5,0.7) [label=below:$8$] {}; 
	\path
	(1) edge (2)
	(1) edge (3)
	(1) edge (4)
	(1) edge (5)
	(2) edge (6)
	(3) edge (7)
	(4) edge (8)
	(1) edge (2)
	(1) edge (3)
	(1) edge (4)
	(1) edge (5)
	(2) edge (6)
	(3) edge (7)
	(4) edge (8)
	(5) edge (6)
	(5) edge(7)
	(5) edge (8);
	\end{tikzpicture}
	\caption{$B_3$}
	\label{B_3}
	\end{minipage}
	\begin{minipage}{.5\textwidth}
		\centering
	\begin{tikzpicture}[scale=0.7]
	\vertex (1) at (0,0) [label= left:$5 $] {};  
	\vertex (2) at (-2.2,-1.5) [label=below:$2 $] {}; 
	\vertex (3) at (2.2,-1.5) [label=below:$3 $] {}; 
	\vertex (4) at (0,1.8) [label=above:$4 $] {}; 
	
	\vertex (5) at (-0.5,0.3) [label= left:$ 8$] {}; 
	\vertex (6) at (0.7,0) [label=above:$7 $] {}; 
	\vertex (7) at (0,-0.7) [label=below:$ 6$] {}; 
	
	\vertex (0) at (3, -1) [label= right:$ 1$] {}; 
	\path
	(1) edge (2)
	(1) edge (3)
	(1) edge (4)
	(1) edge (5)
	(1) edge (6)
	(1) edge (7)
	(2) edge (7)
	(2) edge (3)
	(4) edge (2)
	(4) edge (5)
	(3) edge (6)
	(3) edge (4);
	\path[dashed]
	(4) edge (0)
	(3) edge (0)
	(2) edge (0)
	(1) edge (0);
	\end{tikzpicture}
	\caption{Local complementation of $B_3$ at 1}
	\label{X_1}
\end{minipage}

\end{figure}

\begin{lemma}
	The representation number of $B_3$ is 3.
\end{lemma}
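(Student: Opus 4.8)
The upper bound $\mathcal{R}(B_3) \le 3$ is already in hand: as noted above, $\mathcal{R}(K_{1,3}) = 2$, so the Cartesian-product bound $\mathcal{R}(G \square K_2) \le \mathcal{R}(G) + 1$ of \cite{broere_2019} gives $\mathcal{R}(B_3) = \mathcal{R}(K_{1,3} \square K_2) \le 3$. Hence the whole content of the lemma is the matching lower bound $\mathcal{R}(B_3) \ge 3$, and the plan is to prove the stronger statement that $B_3 \notin \mathcal{R}_{\le 2}$. Since $\mathcal{R}_{\le 2}$ is precisely the class of circle graphs \cite{halldorsson11}, it suffices to show that $B_3$ is \emph{not} a circle graph; combined with the upper bound, $\mathcal{R}(B_3) > 2$ then forces $\mathcal{R}(B_3) = 3$.

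To certify that $B_3$ is not a circle graph I would invoke Bouchet's characterisation of circle graphs by the three minimal forbidden vertex-minors $W_5$, $W_7$ and $BW_3$ \cite{bouchet}. The task then reduces to exhibiting an explicit sequence of local complementations and vertex deletions that transforms $B_3$ into a graph isomorphic to one of these three obstructions; because both operations preserve the vertex-minor relation, a single such witness suffices. The first move is the one indicated in Figure~\ref{X_1}: locally complement $B_3$ at the spine vertex $1$. This turns the independent set $\{2,3,4,5\}$ into a clique while keeping $6,7,8$ attached as before, producing the graph drawn in the figure. From there I would continue with a short chain of further local complementations (for instance at the other spine vertex $5$) together with the deletion of the vertices needed to cut the graph down, aiming to land on $W_5$ after reaching six vertices, or on $W_7$ or $BW_3$ if matching an eight-vertex obstruction turns out to be cleaner.

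The hard part will be the bookkeeping: one must track every adjacency through each local complementation, confirm that no unintended edge survives, and then match the final graph against the rigid structure of a wheel (a universal hub together with an induced chordless cycle) or of $BW_3$. This is the main obstacle precisely because local complementation does not preserve degree sequences, and the neighbourhood of each high-degree vertex of the local complement at $1$ still carries a sub-clique, so no single local complementation followed by deletions immediately exposes a chordless rim; a genuine multi-step reduction is required. Once a correct sequence is pinned down and verified, the forbidden-vertex-minor characterisation yields $B_3 \notin \mathcal{R}_{\le 2}$ at once, and the lemma follows.
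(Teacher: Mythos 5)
Your overall strategy is the right one, and it matches the paper's up to the decisive step: the upper bound via $\mathcal{R}(K_{1,3}\square K_2)\le \mathcal{R}(K_{1,3})+1$ is the same, and the reduction of the lower bound to ``$B_3$ is not a circle graph,'' exploited through the closure of circle graphs under local complementation and the local complementation at the spine vertex $1$, is exactly the paper's opening move. However, the proof has a genuine gap where it matters most: you never produce the certificate that the resulting graph fails to be a circle graph. You propose to reach one of Bouchet's three forbidden vertex-minors $W_5$, $W_7$, $BW_3$ by ``a short chain of further local complementations \ldots together with the deletion of the vertices needed,'' and you yourself observe that no single local complementation followed by deletions exposes a chordless rim, so a multi-step reduction is required --- but that reduction is never exhibited or verified. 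As written, the argument is a plan with its key witness missing; until the explicit sequence of local complementations and deletions is pinned down and checked, the lower bound is not established.

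The paper sidesteps this bookkeeping entirely with a shortcut you should consider: after the single local complementation at $1$, the induced subgraph on the solid edges of Figure~\ref{X_1} (the triangle $\{2,3,4\}$ with pendants $6,7,8$ and the dominating vertex $5$, i.e.\ the net with an apex) is one of the known minimal non-word-representable graphs on seven vertices (see \cite[Figure 3.9]{kitaev15mono}). Since circle graphs are exactly $\mathcal{R}_{\le 2}$ \cite{halldorsson11}, every circle graph is word-representable, and word-representability is hereditary; hence a graph containing a non-word-representable induced subgraph cannot be a circle graph. This yields $B_3\notin\mathcal{R}_{\le 2}$ after one local complementation and one table lookup, with no need to navigate to $W_5$, $W_7$ or $BW_3$. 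Either complete your vertex-minor reduction explicitly or adopt this hereditary argument; in its current form the proposal does not yet prove the lemma.
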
 
\begin{proof}
	Suppose $\mathcal{R}(B_3) \le 2$, i.e. $B_3$ is a circle graph. Since the circle graphs are closed under local complementation\footnote{The local complement of a graph at vertex $v$ is the complement of the induced subgraph on the vertex set $N(v)$.} (cf. \cite{bouchet}), consider the graph obtained by local complementation at vertex $1$ of $B_3$, as shown in Fig. \ref{X_1}. Note that its induced subgraph represented in thick edges in Fig. \ref{X_1} is not a word-representable graph (compare with the list of graphs in \cite[Figure 3.9]{kitaev15mono}).  Hence, the graph in Fig. \ref{X_1} is not a circle graph and so is $B_3$. Therefore, $\mathcal{R}(B_3) \ge 3$. Since $\mathcal{R}(B_3) \le 3$, we have $\mathcal{R}(B_3) = 3$. \qed
\end{proof}

Note that for all $m \ge 3$, $B_m$ contains $B_3$ as an induced subgraph so that $\mathcal{R}(B_m) \ge 3$. Since $\mathcal{R}(B_m) \le 3$, we have the following corollary.

\begin{corollary}
	For all $m \ge 3$, $\mathcal{R}(B_m) = 3$.
\end{corollary}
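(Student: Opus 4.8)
The plan is to sandwich $\mathcal{R}(B_m)$ between $3$ and $3$. The upper bound $\mathcal{R}(B_m) \le 3$ is already in hand, since $\mathcal{R}(K_{1,m}) = 2$ together with the Cartesian-product bound $\mathcal{R}(G \square K_2) \le \mathcal{R}(G) + 1$ gives $\mathcal{R}(B_m) = \mathcal{R}(K_{1,m} \square K_2) \le 3$. Thus the entire task reduces to establishing the lower bound $\mathcal{R}(B_m) \ge 3$ for every $m \ge 3$, and here I would leverage the preceding lemma, which already settles the base case $\mathcal{R}(B_3) = 3$.

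For the lower bound, first I would exhibit $B_3$ as an induced subgraph of $B_m$. Since $K_{1,3}$ sits inside $K_{1,m}$ as an induced subgraph --- retain the center and any three of the $m$ leaves, discarding the rest --- and the Cartesian product with $K_2$ respects the induced-subgraph relation, the vertices of $B_m$ lying over $K_{1,3} \square K_2$ induce a copy of $B_3$. The one point to verify is that no spurious edge is created: an edge of $B_m$ between two retained vertices $(a,b)$ and $(c,d)$ forces either $a = c$ with $\overline{bd} \in E(K_2)$, or $b = d$ with $\overline{ac} \in E(K_{1,m})$, and in the latter case $\overline{ac}$ is already an edge of the retained $K_{1,3}$; hence the induced subgraph is exactly $B_3$.

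Next I would invoke the characterization recorded in the introduction, namely that $\mathcal{R}_{\le 2}$ is precisely the class of circle graphs, together with the fact that circle graphs form a hereditary class (deleting a chord from a circle representation shows closure under induced subgraphs, which is exactly what makes the forbidden-vertex-minor characterization and the local-complementation argument of the preceding lemma meaningful). Suppose for contradiction that $\mathcal{R}(B_m) \le 2$, i.e. that $B_m$ is a circle graph. Then its induced subgraph $B_3$ would also be a circle graph, giving $\mathcal{R}(B_3) \le 2$ and contradicting the preceding lemma. Hence $\mathcal{R}(B_m) \ge 3$, and combining the two bounds yields $\mathcal{R}(B_m) = 3$ for all $m \ge 3$. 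The only delicate step is the first --- confirming that the copy of $B_3$ is genuinely induced rather than merely a subgraph --- but as noted this is immediate from the definition of the Cartesian product, so I expect no real obstacle.
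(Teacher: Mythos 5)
Your proposal is correct and follows essentially the same route as the paper: the paper also derives the lower bound from the fact that $B_m$ contains $B_3$ as an induced subgraph (together with the preceding lemma) and the upper bound from $\mathcal{R}(K_{1,m})=2$ and the Cartesian-product bound. The extra details you supply --- verifying that the copy of $B_3$ is induced and passing the lower bound through the hereditary circle-graph characterization rather than through monotonicity of the representation number under induced subgraphs --- are harmless elaborations of the same argument.
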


Note that a book graph is a bipartite graph and hence it is a comparability graph. Further, $ \mathcal{R}^p(B_m) \ge \mathcal{R}(B_m) = 3$. In the following, we use Algorithm \ref{algo-1}, to construct three permutations for a book graph and show that its \textit{prn} is three.

\begin{figure}[tbh!]
	\centering
	\begin{tikzpicture}[scale=0.7]
		\vertex (0) at (0,-1.5) [label=above:$0$] {};  
		\vertex (1) at (-2,-1) [label=above:$1$] {}; 
		\vertex (2) at (-1,-1) [label=above:$2$] {}; 
		\vertex (3) at (1,-1) [label=above:$3$] {}; 
		\vertex (4) at (3,-1) [label=above:$m$] {}; 
		\vertex (5) at (2,-1) [label=above:$ $] {}; 
		
		\vertex (10) at (0,-2.5) [label=below:$0'$] {};  
		\vertex (11) at (-2,-3) [label=below:$1'$] {}; 
		\vertex (12) at (-1,-3) [label=below:$2'$] {}; 
		\vertex (13) at (1,-3) [label=below:$3'$] {}; 
		\vertex (14) at (3,-3) [label=below:$m'$] {}; 
		\vertex (15) at (2,-3) [label=below:$ $] {}; 
		\path
		(0) edge (1)
		(0) edge (2)
		(0) edge (3)
		(0) edge (4) 
		(10) edge (11)
		(10) edge (12)
		(10) edge (13)
		(10) edge (14)
		(0) edge (5)
		(10) edge (15);
		
		\path[dashed]
		(1) edge (11)
		(2) edge (12)
		(3) edge (13)
		(4) edge (14)
		(0) edge (10)
		(15) edge (5);
		
	\end{tikzpicture}
	\caption{$B_m$}
	\label{Constb5}	
\end{figure}

Consider the book graph $B_m$ with vertex set $\{0, 1, \ldots, m\} \cup \{0', 1', \ldots, m'\}$ as shown in the Fig. \ref{Constb5}. Let $T$ be the star graph $K_{1,m}$ on the vertex set $\{0, 1, \ldots, m\}$ and $T'$ be the star graph $K_{1,m}$ on the vertex set $\{0', 1', \ldots, m'\}$.  Using Algorithm \ref{algo-1}, construct three permutation for $T$ with $0$ as the root. The three permutations are as follows:
\begin{align*}
	p_1 & = 1 2  \cdots (m-1) m 0 \\
	p_2 & = m (m-1) \cdots 2 1 0 \\
	p_3 & = 1 2  \cdots (m-1) m 0
\end{align*} 
Since $p_1p_2p_3$ represents $T$ permutationally, we have $r(p_1)r(p_2)r(p_3)$ also represents $T$ permutationally, where $r(p_i)$ is the reversal of the permutation $p_i$. Accordingly, consider the following permutations whose concatenation represents $T'$ permutationally.
\begin{align*}
	p_1' & = 0' m' (m-1)' \cdots 2' 1'\\
	p_2' & = 0' 1' 2' \cdots (m-1)' m' \\
	p_3' & = 0' m' (m-1)' \cdots 2' 1'
\end{align*} 
In addition to edges of $T$ and $T'$, we have the edges $\overline{ii'}$ (for $0 \le i \le m$) in $B_m$. Using $p_1,p_2,p_3,p_1',p_2'$ and $p_3'$, we construct the following three permutations in which $i$ and $j'$ alternate if and only if $i = j$:
\begin{align*}
  q_1 &= 0' 1 1' 2 2' \cdots (m-1)(m-1)' m m'0 \\
  q_2 &= 0' m m' (m-1)(m-1)' \cdots 2 2' 1 1'0\\
  q_3 &= 1 2 \cdots (m-1) m 0' 0 m' (m-1)' \cdots 2' 1'
\end{align*}
Here, the permutations $q_1$ and $q_2$ are constructed by shuffling the letters of $p_2'$ and $p_1$, and $p_1'$ and $p_2$, respectively. The permutation $q_3$ is constructed by concatenating $p_3$ and $p_3'$, and swapping $0$ and $0'$. It can be verified that $q_1q_2q_3$ represents $B_m$ permutationally. 
 
\begin{theorem}
 	For $m \ge 3$, $\mathcal{R}^p(B_m) = 3$.
\end{theorem}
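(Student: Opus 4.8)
The plan is to combine the lower bound already in hand with an upper bound witnessed by the three permutations $q_1, q_2, q_3$ constructed above. For the lower bound, a book graph is bipartite and hence a comparability graph, and the preceding corollary gives $\mathcal{R}(B_m) = 3$ for $m \ge 3$; the general inequality $\mathcal{R}(G) \le \mathcal{R}^p(G)$ then immediately yields $\mathcal{R}^p(B_m) \ge 3$. It therefore remains to prove $\mathcal{R}^p(B_m) \le 3$, that is, that the concatenation $q_1 q_2 q_3$ of the three permutations displayed above represents $B_m$ permutationally. This verification is the whole of the work.

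The key reduction I would record first concerns words of this special shape: since $q_1 q_2 q_3$ is a concatenation of three permutations, for any two vertices $x, y$ the restriction $(q_1 q_2 q_3)_{\{x,y\}}$ consists of three blocks, each equal to $xy$ or $yx$, and such a length-six word is alternating precisely when all three blocks coincide. Hence $x$ and $y$ alternate in $q_1 q_2 q_3$ if and only if $x$ precedes $y$ in all three of $q_1, q_2, q_3$, or $y$ precedes $x$ in all three. This turns the entire argument into a finite check of relative orders, which I would organize by the edge-classes and non-edge-classes of $B_m$.

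For the edges I would verify consistency directly for the spine edges $\overline{0i}$ and $\overline{0'i'}$ (with $1 \le i \le m$), the matching edges $\overline{ii'}$ (with $1 \le i \le m$), and $\overline{00'}$. In each of the three permutations the relevant pair sits in the same relative order; for instance $i$ precedes $0$ in all three, $0'$ precedes $i'$ in all three, and $i$ immediately precedes $i'$ in $q_1, q_2$ while $i$ (in the unprimed block) precedes $i'$ (in the primed block) in $q_3$. For the non-edges I would exhibit a disagreement: the leaf non-edges $\overline{ij}$, $\overline{i'j'}$ (with $1 \le i < j \le m$) and the mixed non-edges $\overline{ij'}$ (with $1 \le i, j \le m$, $i \ne j$) are already separated by $q_1$ versus $q_2$, since $q_1$ lists the slots in increasing order and $q_2$ in decreasing order, so the relative order of any two distinct slots is reversed between them.

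The step I expect to be the main obstacle, and the reason the third permutation is genuinely needed, is the pair of non-edge classes $\overline{0j'}$ and $\overline{i0'}$ (with $1 \le i, j \le m$). In both $q_1$ and $q_2$ the vertex $0$ sits at the extreme right and $0'$ at the extreme left, so against every leaf these two behave exactly as if adjacent, and $q_1, q_2$ alone cannot destroy this spurious alternation. It is precisely $q_3$, in which $0$ and $0'$ are moved into the interior as the adjacent block $\cdots m\, 0'\, 0\, m' \cdots$, that reverses the order: there $0$ now precedes every primed leaf (while following it in $q_1, q_2$) and symmetrically $0'$ follows every unprimed leaf. Confirming that $q_3$ achieves this for all leaves simultaneously without breaking the alternation of any genuine edge is the crux of the verification; once it is checked, the equivalence holds for every pair, so $q_1 q_2 q_3$ represents $B_m$, giving $\mathcal{R}^p(B_m) \le 3$ and hence $\mathcal{R}^p(B_m) = 3$.
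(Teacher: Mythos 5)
Your proposal is correct and follows essentially the same route as the paper: the lower bound $\mathcal{R}^p(B_m)\ge 3$ from $\mathcal{R}(B_m)=3$ together with $\mathcal{R}(G)\le\mathcal{R}^p(G)$, and the upper bound by checking that $q_1q_2q_3$ represents $B_m$ permutationally. In fact you carry out the pairwise order-consistency verification that the paper leaves as ``it can be verified,'' and your case analysis --- in particular the observation that $q_3$ is exactly what separates the non-adjacent pairs $\{0,j'\}$ and $\{i,0'\}$, which $q_1$ and $q_2$ alone cannot --- is accurate.
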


\begin{remark}
	The book graphs is a class of graphs for which the representation number and \textit{prn} are the same.
\end{remark}
	
\section{Conclusion}

Modular decomposition of a graph $G$ is a partition of the vertex set of $G$ into modules, where a module $M$ is an induced subgraph of $G$ such that every vertex in $M$ has the same neighborhood outside $M$ \cite{Gallai}. In \cite{kitaev13}, it was proved that $\mathcal{R}(G') = \max \{\mathcal{R}(G), \mathcal{R}(M)\}$, where $G'$ is a modular extension of a word-representable graph $G$ by a module $M$ that is a comparability graph. Using the aforesaid result and the \textit{prn} of a tree, we may identify a large class of graphs with the \textit{prn} of at most three. For instance, if we replace every vertex of a tree with a module that is a permutation graph, then the resultant graph has the representation number two and the \textit{prn} at most three. Further, based on our observations, we state the following conjecture on the comparability graphs with \textit{prn} at most three. 

\begin{conjecture}
	Let $G$ be a comparability graph. If $G \in \mathcal{R}_2$ then $G \in \mathcal{R}^p_{\le 3}$.
\end{conjecture}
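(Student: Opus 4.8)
The plan is to recast the conjecture entirely in the language of poset dimension and then attack it through the structural decomposition of circle graphs. By \cite{halldorsson11}, $\mathcal{R}^p(G)$ equals the dimension $\dim P$ of the poset $P$ obtained from any transitive orientation of $G$, and $\mathcal{R}_{\le 2}$ is exactly the class of circle graphs. Since a graph in $\mathcal{R}_2$ is a non-complete circle graph while complete comparability graphs have dimension $1$, the conjecture is equivalent to the clean statement that \emph{every comparability graph which is also a circle graph has dimension at most $3$}. I would first test this at its boundary on the canonical obstructions to low dimension, the standard examples: their comparability graphs (the crown graphs $S_n^0$) of dimension $\ge 4$ must be verified to have representation number $\ge 3$, consistent with \cite{glen18,broere_2019}, so that they correctly fall outside $\mathcal{R}_2$; meanwhile the even cycles $C_{2k}$ ($k \ge 3$) of Section~5 confirm that dimension exactly $3$ is genuinely attained inside the class.

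The core of the argument would be an induction on the modular decomposition of $G$ (cf. \cite{Gallai}). Since both circle graphs and comparability graphs are closed under taking induced subgraphs, every factor (module) and the quotient (representative) graph of the decomposition is again a comparability circle graph, so the induction hypothesis applies to each piece. For a comparability graph the modular decomposition is compatible with any transitive orientation, so substituting a module realizes a lexicographic sum at the poset level; I would then invoke the classical fact that the dimension of a lexicographic sum equals the maximum of the dimension of the index poset and the dimensions of the summands (see \cite{trotter_dim}). This mirrors the modular-extension behaviour of the representation number established in \cite{kitaev13} and would propagate a bound of $3$ upward through the decomposition tree, reducing everything to the case where $G$ is prime (the series, parallel, and clique/antichain nodes contributing dimension at most $2$).

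For the prime base case I would exploit the rigidity available on both sides. By Gallai's theorem a prime comparability graph carries an essentially unique transitive orientation, so its dimension is an intrinsic invariant, and by Bouchet's analysis a prime circle graph has an essentially unique chord diagram \cite{bouchet}. The aim is to construct a realizer of size $3$ directly from this chord diagram: reading the cyclic order of the $2n$ chord-endpoints and cutting the circle at a bounded number of well-chosen points yields linear orders of the vertices, and I would try to show that three such cuts suffice to recover the transitive orientation, with the two-cut situation corresponding precisely to the permutation-graph case of Section~\ref{permutation_graph} where $\dim \le 2$. The prime pieces that are circle graphs but not permutation graphs, the even cycles being prototypical, should form a sufficiently structured family for this construction to be carried out and verified.

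The step I expect to be the genuine obstacle is this prime base case, for two linked reasons. First, dimension is \emph{not} characterized by finitely many forbidden subposets (this is implicit in the hardness results of \cite{yanna82}), so one cannot simply enumerate obstructions; the entire burden of forcing $\dim \le 3$ must be carried by the circle constraint, that is, by Bouchet's three forbidden vertex-minors together with closure under local complementation \cite{bouchet}, and translating these graph-theoretic obstructions into a dimension bound is delicate. Second, the canonical decomposition native to circle graphs is Cunningham's \emph{split} decomposition, which is strictly finer than the modular decomposition used in the inductive step; reconciling the two, so that the dimension-preserving lexicographic-sum reduction still applies across splits that are not modules, is the point at which the argument is most likely to break and where a careful structural lemma relating the splits of a comparability circle graph to its (essentially unique) transitive orientation would be needed.
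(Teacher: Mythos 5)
The statement you set out to prove is presented in the paper only as a conjecture: the authors give no proof, so there is nothing to compare your attempt against, and your proposal must stand on its own. It does not — it is a research program, not a proof, and you yourself flag its decisive step as unresolved. The sound parts are the translation (via \cite{halldorsson11}, the conjecture becomes: every comparability graph that is a circle graph has poset dimension at most $3$, the complete case being trivial) and the skeleton of a modular-decomposition induction: both classes are hereditary, quotients are induced subgraphs, and module substitution is a lexicographic sum at the poset level, for which the dimension equals the maximum over the base and the blocks (this is Hiraguchi's formula, incidentally, not what \cite{trotter_dim} proves — that reference is about planar posets). But this machinery only funnels the whole conjecture into the prime case, where your proposal has no argument, only an aspiration.

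Concretely, the prime step fails as described. Cutting the chord diagram at a point yields a sequence of $2n$ endpoints with \emph{two} occurrences of each vertex, not a permutation, so "three cuts yield linear orders" needs an extraction rule you never specify; more seriously, a realizer must consist of \emph{linear extensions} of the fixed transitive orientation, and arbitrary cuts have no reason to produce linear extensions, let alone three whose intersection is exactly the partial order — nothing in the proposal forces every incomparable pair to be reversed between two of the cuts. Your claim that the two-cut situation "corresponds precisely" to the permutation-graph case is also not justified: a $2$-uniform word representing a circle graph splits into two permutations only at special cut positions, and the existence of such positions is essentially the $k=2$ instance of the very statement at issue. Finally, as you yourself note, prime circle graphs are governed by Cunningham's split decomposition, which is strictly finer than the modular decomposition your induction uses, and no lemma reconciling the two is offered. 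So the proposal correctly locates where the difficulty lives — the prime comparability circle graphs — but does not overcome it, and the statement remains, as in the paper, open.
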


\appendix

\section{Demonstration of Algorithm \ref{algo-1}} \label{tree-demo}

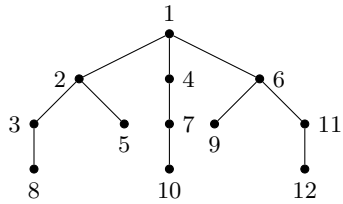
\begin{figure}
	\centering
	\begin{tikzpicture}[scale=0.6] 
	\vertex (1) at (0,0) [label=above:$1$] {};  
	\vertex (2) at (-2,-1) [label=left:$2$] {};
	\vertex (3) at (-3,-2) [label=left:$3$] {};  
	\vertex (4) at (0,-1) [label=right:$4$] {}; 	
	\vertex (5) at (-1,-2) [label=below:$5$] {}; 
	\vertex (6) at (2,-1) [label=right:$6$] {}; 
	\vertex (7) at (0,-2) [label=right:$7$] {}; 
	\vertex (8) at (-3,-3) [label=below:$8$] {}; 	
	\vertex (9) at (1,-2) [label=below:$9$] {}; 
	\vertex (10) at (0,-3) [label=below:$10$] {}; 
	\vertex (11) at (3,-2) [label=right:$11$] {}; 
	\vertex (12) at (3,-3) [label=below:$12$] {}; 
	
	\path
	(1) edge (2)
	(1) edge (4)
	(1) edge (6)
	(2) edge (3)
	(2) edge (5)
	(3) edge (8)
	(4) edge (7)
	(7) edge (10)
	(6) edge (9)
	(6) edge (11)
	(11) edge (12);
	\end{tikzpicture}
	\caption{A 3-ary tree $T$}
	\label{T}
\end{figure}

Consider the tree given in Fig. \ref{T}. Let $A = \{2,4,6,8,10,12\}$ and $B=\{1,3,5,7,9,11\}$. 
Firstly, as per the algorithm, $p_1 = p_2 = p_3 = 1$.
The vertices $2, 4, 6$ are the children of $1$. Since $1$ is odd, the permutations $p_1,p_2,p_3$ are updated as follows:
\begin{align*}
p_1 &= 2461\\
p_2 &= 6421\\
p_3 &= 2461
\end{align*}
Next, since $2<4<6$, the vertex 2 is visited first among the children of 1. Since $2$ is even and $3,5$ are the children of 2, the permutations $p_1,p_2,p_3$ are updated as follows:
\begin{align*}
p_1 &= 2 3 5 4 6 1\\
p_2 &= 6 4 2 1 5 3\\
p_3 &= 2 3 5 4 6 1
\end{align*}
Then, the vertex 4 is visited followed by the vertex 6. Note that when the vertex 6 is visited, the subword, $617$, is replaced by the word $6179(11)$ in $p_1$, where $D_4 = \{7\}$. The following are the  updated permutations:
\begin{align*}
p_1 &= 2 3 5 4 6 1 7 9 (11) \\
p_2 &= 6 (11) 9 4 7 2 1 5 3\\
p_3 &= 2 3 5 4 7 6 9 (11) 1
\end{align*}
Next, the vertex 3 is visited, followed by 7, 9 and 11. Note that 5 is a leaf, so it will not be visited. The algorithm ends as the vertices 8, 10, and 12 are leaves of $T$, and $Q$ becomes empty. The following are the final permutations: 
\begin{align*}
p_1 &= 2 8 3 5 4 6 1 (10) 7 9 (12) (11)  \\
p_2 &= 6 (12) (11) 9 4 (10) 7 2 1 5 8 3 \\
p_3 &=  (12) (10) 8 2 3 5 4  7 6 9 (11) 1
\end{align*}

\section{Details of the Proof of Lemma \ref{path}}
\label{app_lem4}

\begin{itemize}
	\item Case 1: $n$ is odd. As per Algorithm \ref{algo-2}, it is evident that 			
	$$p_2 = a_2 a_1 a_4 a_3 a_6 a_5\cdots a_{n-1}a_{n-2}a_n$$
	$$p_3 = a_{n-1} a_n a_{n-3} a_{n-2} \cdots a_6 a_7 a_4 a_5 a_2 a_3 a_1$$
	
	Suppose $a_j \in P_n$.  
	\begin{itemize}
		\item[-] If $j = 1$, note that $a_1$ is adjacent to only $a_2$ in $P_n$. Clearly, $a_2 a_1 \le p_2$ and $p_3$ both. Further, for $3 \le k \le n$, $a_1 a_k \le p_2$. Whereas, $a_k a_1 \le p_3$. Hence, we are through when $j = 1$.
		
		\item[-] Similarly, if $j = n$, note that $a_n$ is adjacent to only $a_{n-1}$ in $P_n$ and $a_{n-1}a_n \le p_2$ as well as $p_3$. On the other hand, for $1 \le k \le n-2$, $a_k a_n \le p_2$ and $a_n a_k \le p_3$, as desired. 
		
		\item[-] For $2 \le j \le n-1$, note that $a_j$ is adjacent only to $a_{j-1}$ and $a_{j+1}$. We consider the following two subcases depending on whether $j$ is odd or even: 
		\begin{itemize}
			\item[-] Suppose $j$ is odd. The vertex $a_j$ appears in $p_2$ and $p_3$ as shown below: 
			$$p_2= a_2 a_1 a_4 \cdots a_{j-1} a_{j-2} a_{j+1}a_j a_{j+3}a_{j+2} \cdots a_{n-1} a_{n-2} a_{n} $$
			$$ p_3 = a_{n-1} a_{n} a_{n-3} a_{n-2} \cdots  a_{j+1} a_{j+2} a_{j-1} a_j a_{j-3} a_{j-2} \cdots a_2 a_3 a_1$$
			
			Clearly, the pairs of vertices $\{a_{j-1}, a_{j}\}$ and $\{a_{j+1}, a_{j}\}$ alternate in both $p_2$ and $p_3$. Whereas, when $ j+2 \le k \le n$, $a_j a_k \le p_2$ and $a_k a_j \le p_3$. Also, when $ 1 \le k \le j-2$, $a_k a_j \le p_2$ and $a_j a_k \le p_3$.
			
			\item[-] Suppose $j$ is even. The vertex $a_j$ appears in $p_2$ and $p_3$ as shown below:  
			$$p_2= a_2 a_1 a_4 \cdots a_{j-2} a_{j-3}a_j a_{j-1}a_{j+2}a_{j+1} \cdots a_{n-1} a_{n-2} a_{n} $$
			$$ p_3 = a_{n-1} a_{n} a_{n-3} a_{n-2} \cdots  a_{j+2} a_{j+3} a_j a_{j+1} a_{j-2} a_{j-1} \cdots a_2 a_3 a_1$$
			
			Clearly, $a_{j}a_{j-1} \le p_2$ and $ a_j  a_{j-1} \le p_3$. So also, $a_{j}a_{j+1} \le p_2$ and $ a_j a_{j+1}  \le p_3$.  Whereas, $a_j a_k \le p_2$ and $a_k a_j \le p_3$, where $ j+2 \le k \le n$. Also, 	$a_k a_j \le p_2$ and $a_j a_k \le p_3$, where $ 1 \le k \le j-2$.
		\end{itemize}
	\end{itemize}
	\item Case 2: $n$ is even.	As per Algorithm \ref{algo-2}, it is clear that				
	$$p_2 = a_2 a_1 a_4 a_3a_6 a_5 \ldots a_{n-2}a_{n-3}a_n a_{n-1}$$
	$$p_3 = a_n a_{n-2} a_{n-1} a_{n-4} a_{n-3} \ldots a_6 a_7 a_4 a_5 a_2 a_3 a_1$$
	As in Case 1, in the following, we show that two vertices of $P_n$ are adjacent  if and only if they alternate in $p_2p_3$. 
	
	Suppose $a_j \in P_n$.  
	\begin{itemize}
		\item[-] If $j = 1$, note that $a_1$ is adjacent to only $a_2$ in $P_n$. Clearly, $a_2 a_1 \le p_2$ and $p_3$ both. Further, for $3 \le k \le n$, $a_1 a_k \le p_2$. Whereas, $a_k a_1 \le p_3$. Hence, we are through when $j = 1$.
		
		\item[-] Similarly, if $j = n$, note that $a_n$ is adjacent to only $a_{n-1}$ in $P_n$ and $a_n a_{n-1} \le p_2$ as well as $p_3$. On the other hand, for $1 \le k \le n-2$, $a_k a_n \le p_2$ and $a_n a_k \le p_3$, as desired. 
		
		\item[-] For $2 \le j \le n-1$, note that $a_j$ is adjacent only to $a_{j-1}$ and $a_{j+1}$. We consider the following two subcases depending on whether $j$ is odd or even: 
		\begin{itemize}
			\item[-] Suppose $j$ is odd. The vertex $a_j$ appears in $p_1$ and $p_3$ as shown below: 
			$$p_2= a_2 a_1 a_4 \cdots a_{j-1} a_{j-2} a_{j+1}a_j a_{j+3}a_{j+2} \cdots  a_{n-2} a_{n-3}a_{n} a_{n-1} $$
			$$ p_3 = a_{n} a_{n-2} a_{n-1} a_{n-4} \cdots  a_{j+1} a_{j+2} a_{j-1} a_j a_{j-3} a_{j-2} \cdots a_2 a_3 a_1$$.
			
			Clearly, the pairs of vertices $\{a_{j-1}, a_{j}\}$ and $\{a_{j+1}, a_{j}\}$ alternate in both $p_2$ and $p_3$. Whereas, when $ j+2 \le k \le n$, $a_j a_k \le p_2$ and $a_k a_j \le p_3$. Also, when $ 1 \le k \le j-2$, $a_k a_j \le p_2$ and $a_j a_k \le p_3$.
			
			\item[-] Suppose $j$ is even. The vertex $a_j$ appears in $p_2$ and $p_3$ as shown below:  
			$$p_2= a_2 a_1 a_4 \cdots a_{j-2} a_{j-3}a_j a_{j-1}a_{j+2}a_{j+1} \cdots a_{n-2}a_{n-3} a_{n} a_{n-1} $$
			$$ p_3 = a_{n} a_{n-2} a_{n-1}a_{n-4}a_{n-3} \cdots  a_{j+2} a_{j+3} a_j a_{j+1} a_{j-2} a_{j-1} \cdots a_2 a_3 a_1$$
			
			Clearly, $a_{j}a_{j-1} \le p_2$ and $ a_j  a_{j-1} \le p_3$. So also, $a_{j}a_{j+1} \le p_2$ and $ a_j a_{j+1}  \le p_3$.  Whereas, $a_j a_k \le p_2$ and $a_k a_j \le p_3$, where $ j+2 \le k \le n$. Also, 	$a_k a_j \le p_2$ and $a_j a_k \le p_3$, where $ 1 \le k \le j-2$.
		\end{itemize}
	\end{itemize}
\end{itemize}

\end{document}